\documentclass{article}

\usepackage[preprint]{staix_2026}      

\usepackage[utf8]{inputenc}
\usepackage[T1]{fontenc}
\usepackage{hyperref}
\usepackage{url}
\usepackage{booktabs}
\usepackage{amsfonts}
\usepackage{nicefrac}
\usepackage{microtype}
\usepackage{xcolor}
\usepackage{graphicx}
\usepackage{amsmath,amssymb,amsthm}
\usepackage{bm}
\usepackage{enumitem}

\usepackage{algorithm}
\usepackage{algpseudocode}

\theoremstyle{plain}
\newtheorem{theorem}{Theorem}[section]
\newtheorem{proposition}[theorem]{Proposition}

\theoremstyle{definition}

\newtheorem{assumption}[theorem]{Assumption}
\newtheorem{example}[theorem]{Example}
\theoremstyle{remark}

\usepackage[textsize=tiny]{todonotes}

\title{Global Average Treatment Effects for Individualized Randomization Experiments with Aggregate Data}

\author{Shuguang Yu$^{a}$\thanks{The first two authors have contributed equally to this paper.}, Ting Li$^{a}$, Yuchen Lu$^{a}$, Chengchun Shi$^{b}$, Fan Zhou$^{a}$, Zhichao Zou$^{c}$,\\ \textbf{Peng Zhen}$^{c}$, \textbf{and Hongtu Zhu}$^{d}$\thanks{Corresponding author. Email: htzhu@email.unc.edu}\\
$^a$Shanghai University of Finance and Economics, Shanghai, China\\
$^b$London School of Economics and Political Science, London, UK\\
$^c$DiDi China Ride Hailing Business Group, Beijing, China\\
$^d$University of North Carolina at Chapel Hill, North Carolina, USA}

\begin{document}

\maketitle

\begin{abstract}
Individualized randomized experiments are central to online platforms for optimizing personalized decisions in complex environments. In two-sided markets, however, standard treatment effect estimation is often invalid due to strong temporal and cross-unit interference, a challenge compounded when only aggregated data are available because of privacy or system constraints. To address these issues, we identify the Global Average Treatment Effect (GATE) using only group-level data from treatment and control groups. We first establish identification conditions based on aggregated observations, and then propose the Individualized Randomized Experiment Varying Coefficient Decision Process (IRE-VCDP) model, which accounts for interference through supply–demand dynamics. Building on this framework, we develop a complete procedure for estimation and statistical inference of the GATE, along with theoretical guarantees for the proposed test. Extensive simulations and real-world experiments using data from a leading ridesharing platform demonstrate the effectiveness of our approach.
\end{abstract}

\section{Introduction}

Individualized randomized experiments (IREs) assign treatments at the individual level over time, with treatment status held fixed within each experimental period. Policy evaluation under IREs is of particular importance in two-sided markets — such as ridesharing, food delivery, and freelance platforms — where interventions simultaneously affect both sides of the market and induce complex interactions and interference patterns across participants \citep{johari2022experimental, shi2023multiagent}. The primary objective is to introduce an intervention to a subset of the market and use the observed outcomes to estimate its effect if deployed at full scale. We refer to this estimand as the Global Average Treatment Effect (GATE, formally defined in \eqref{eq:GATE_def_individual}). Accurate estimation of GATE is central to platform decisions about whether to roll out a new policy to the entire market.

Our study is motivated by a concrete ridesharing marketplace setting, in which a city-scale online experiment is conducted to evaluate a new passenger subsidy policy. Passengers are randomly assigned at the outset to either a treatment group, receiving the new subsidy, or a control group, remaining under the existing policy, over a two- to four-week experimental period. In practice, individual-level data are often unavailable due to privacy constraints. Instead, only aggregated group-level metrics are observed in one-hour intervals, including Gross Merchandise Value (GMV), passenger demographic summaries, and key supply and demand variables such as order frequencies and driver online hours. The platform's goal is to quantify the causal impact of this policy on GMV under full deployment using only these aggregate observations.

 \textbf{Challenges.} Despite the widespread adoption of IREs in two-sided markets, reliable treatment effect estimation in this dynamic setting remains fundamentally difficult, for three reasons. The first challenge is cross-unit interference: interventions applied to one side of the market, for example, a passenger subsidy,  propagate through supply–demand interactions and affect outcomes on the other side, thereby violating the standard Stable Unit Treatment Value Assumption (SUTVA) \citep{imbens2015causal}. The second challenge is temporal interference: treatment effects evolve over time through dynamic market adjustments, so that an intervention at one time point can influence both contemporaneous and future outcomes \citep{wang2023robust, li2024evaluating, li2025testing}. Ignoring either form of interference leads to biased treatment effect estimates \citep{li2023optimal}. The third challenge is data availability: privacy and system constraints restrict access to individual-level records, leaving only aggregated group-level observations for analysis — a regime that remains largely underexplored in the causal inference literature. Taken together, these challenges render classical causal inference methods inadequate and motivate the development of new approaches tailored to two-sided markets.


\textbf{Contributions.} The primary goal of this paper is to study the GATE in individualized randomized experiments within two-sided markets using aggregated data, while explicitly accounting for both cross-unit and temporal interference. Our contributions are fourfold.
 
 (i) \textbf{Identification.} We establish identification conditions for the GATE using only group-level data from the treatment and control groups, together with market-level features. This result enables valid GATE estimation without access to individual-level records. To the best of our knowledge, this is the first work to provide such identification based solely on aggregated data.
 
 (ii) \textbf{Modeling framework.} We propose a two-layer modeling framework, the Varying Coefficient Decision Process (VCDP), that separates the outcome process from the underlying supply–demand dynamics. Interference is encoded through the supply–demand mechanism, reflecting the intrinsic structure of two-sided markets in which units interact via matching. The VCDP accommodates time-varying treatment effects and dynamic interference, with supply and demand variables serving as mediators that jointly capture cross-unit and temporal interference.

(iii) \textbf{Estimation.} We show that the GATE can be expressed as an explicit function of the VCDP model coefficients, and we develop a two-step estimation procedure that combines least squares estimation with kernel smoothing to improve efficiency and stability under weak signals and limited sample sizes. The GATE is then recovered via plug-in estimation from the fitted coefficients.

(iv) \textbf{Inference and empirical validation.} We develop a bootstrap-based testing procedure to assess whether deploying the new treatment across the full market improves platform revenue relative to the baseline policy, and we establish its consistency using Gaussian approximation theory \citep{chernozhukov2013gaussian}. We further validate the practical effectiveness of our approach through an empirical study on data from a leading ridesourcing platform, demonstrating improved performance relative to classical testing methods.

\subsection{Related Work}

\textbf{Causal inference in A/B testing.} The classical rationale for A/B testing rests on the SUTVA \citep{rubin1980randomization}, which requires that the treatment assigned to one unit does not affect the outcomes of others. In two-sided marketplaces, however, interference is typically unavoidable: interventions on one side of the market propagate through supply–demand interactions to affect outcomes on the other side, rendering standard A/B analyses causally invalid \citep{halloran2016dependent, savje2021average}.

Within the switchback design literature, there is a rich body of work on policy evaluation under interference \citep{hu2022switchback, bojinov2023design, farias2023correcting, luo2024policy, li2024evaluating}. By contrast, individualized randomized experiments (IREs) have received considerably less attention. The most closely related work is \citet{liu2023unite}, who proposed UniTE to estimate individual-level conditional average treatment effects in two-sided markets via Robinson decomposition. While effective in principle, UniTE requires granular transaction-level data that are often severely sparse in practice — in a two-week experiment, a single passenger may complete only one ride. Practitioners constrained to aggregated group-level data are thus largely limited to simple mean differences or difference-in-differences estimators, which are well known to be biased in the presence of interference \citep{imbens2015causal}. Bridging this gap between data availability and methodological requirements is the central motivation of this paper.

\textbf{Varying coefficient models.} Varying coefficient models, introduced by \citet{hastie1993varying}, allow regression coefficients to vary smoothly as functions of an index variable such as time. A rich estimation and inference theory has since developed, spanning kernel and local polynomial methods \citep{hoover1998nonparametric, fan1999statistical}, spline-based approaches for longitudinal data \citep{chiang2001smoothing, huang2004polynomial}, and broad applications in spatial, epidemiological, and longitudinal settings \citep{sun2023semiparametric, zhang2025bayesian, sadovnichiy2022mathematical}; see \citet{park2015varying} for a comprehensive survey. More recently, varying coefficient models have been applied to online A/B testing, though existing work is confined to switchback designs \citep{luo2024policy, li2024evaluating}. To the best of our knowledge, the present paper is the first to extend this framework to individualized randomized experiments, where interference must be handled through the supply–demand structure rather than the experimental design itself.

\section{Problem Formulation}

\textbf{Data.} Consider an experiment conducted over $n$ consecutive days, each divided into $m$ time intervals. At the outset, $N = N_0 + N_1$ subjects are randomly assigned to either a treatment group of size $N_1$ or a control group of size $N_0$, with assignments held fixed throughout the experiment. For day $d \in \{1,\dots,n\}$ and interval $t \in \{1,\dots,m\}$, subjects in the treatment group receive the new policy ($A_t^d = 1$) and subjects in the control group receive the baseline policy ($A_t^d = 0$). Let $Y_d^i(t) \in \mathbb{R}$ denote the outcome for subject $i$ at interval $t$ on day $d$.

Due to privacy constraints, individual-level outcomes are not observed. Instead, only group-level aggregates are available for each day $d$ and interval $t$. Specifically, for each group $C_k$ ($k \in \{0,1\}$), we observe three types of summaries: the average outcome
$
    Y_{d,C_k}(t) = N_k^{-1} \sum_{i \in C_k} Y_d^i(t),
$
the average covariate vector $X_{d,C_k}(t) \in \mathbb{R}^p$, and the group-level market variables $S_{d,C_k}(t) \in \mathbb{R}^q$, which capture supply and demand conditions faced by each group. Observations are assumed to be independent across days, while temporal dependence within each day is explicitly permitted.

\textbf{Objective}. 
We adopt the potential outcomes framework \citep{rubin2005causal} to formalize the problem. Let \( \bar{a}_t^i = (a_1^i, \dots, a_t^i)^\top \in \{0,1\}^t \) denote the treatment history of subject \( i \) up to time \( t \), and let \( \bar{\bm{a}}_{N,t} = (\bar{a}_t^1, \dots, \bar{a}_t^N) \) denote the joint treatment history for all \( N \) subjects. Define \( S(t, \bar{\bm{a}}_{N,t-1}) \) and \( Y^i(t, \bar{\bm{a}}_{N,t}) \) as the counterfactual market state and outcome for subject \( i \), respectively, under treatment history \( \bar{\bm{a}}_{N,t} \).
Our primary objective is to quantify the Global Average Treatment Effect (GATE), defined as the difference in expected cumulative outcomes between deploying the new policy to the entire market and maintaining the baseline policy:
\begin{equation}
\label{eq:GATE_def_individual}
\begin{aligned}
\text{GATE}
= \frac{1}{N}
\mathbb{E} \{\sum_{t=1}^m \sum_{i=1}^N  Y^i \big( t, \bm{1}_{N,t} \big)\}
- \frac{1}{N}
\mathbb{E} \{\sum_{t=1}^m \sum_{i=1}^N  Y^i \big( t, \bm{0}_{N,t} \big)\},
\end{aligned}
\end{equation}
where \( \bm{1}_{N,t} \) (respectively, \( \bm{0}_{N,t} \)) denotes that all entries in \( \bar{\bm{a}}_{N,t} \) are equal to 1 (respectively, 0).

Our goal is to estimate the GATE using aggregated data and to test the hypothesis
\begin{equation}
\label{eq:hypothesis}
H_0: \text{GATE} \leq 0 
\quad \text{versus} \quad 
H_1: \text{GATE} > 0,
\end{equation}
which evaluates whether deploying the new policy leads to an improvement over the baseline in terms of overall market outcomes.



\section{Identification}

The SUTVA \citep{imbens2015causal} is violated 
in our setting for two reasons. First, a subject's outcome depends not only on their own current 
treatment, but also on their prior treatment history. Second, the treatment trajectories of other 
subjects affect individual outcomes through shared supply--demand dynamics, inducing both cross-unit 
and temporal interference. Without additional structure on the nature of this interference, 
identifying the treatment effect may be impossible \citep{basse2018limitations, liu2024cluster}. 
We therefore impose the following four assumptions to identify the GATE from the observed 
aggregate data.

\begin{assumption}[Dynamic local interference]
    \label{assump: ATRA}
    For each $t$, if $a_t^i = a_t^{i\prime}$ and the fraction of treated subjects coincides 
    under $\bar{\bm{a}}_{N,t}$ and $\bar{\bm{a}}^\prime_{N,t}$, then 
    $Y^i(t, \bar{\bm{a}}_{N,t}) = Y^i(t, \bar{\bm{a}}^\prime_{N,t})$ 
    and 
    $S(t+1, \bar{\bm{a}}_{N,t}) = S(t+1, \bar{\bm{a}}^\prime_{N,t})$.
\end{assumption}

\begin{assumption}[Consistency]
    \label{assump: CA}
    If the treatment policy satisfies $\bar{\bm{A}}_{N,t} = \bar{\bm{a}}_{N,t}$, then 
    $S(t) = S(t, \bar{\bm{a}}_{N,t-1})$ and $Y^i(t) = Y^i(t, \bar{\bm{a}}_{N,t})$.
\end{assumption}

\begin{assumption}[Sequential ignorability]
    \label{assump: SRA}
    The treatment $\bm{A}^i(t)$ assigned to subject $i$ at time $t$ is conditionally independent 
    of all potential variables given the observed data history.
\end{assumption}

\begin{assumption}[Positivity]
    \label{assump: PA}
    For any $t \geq 1$, the probability of $\{A^i(t) = 1\}$ given the current state is strictly 
    bounded away from zero and one.
\end{assumption}

We now discuss the role and justification of each assumption. Assumption~\ref{assump: ATRA} 
restricts the interference structure: it requires that a subject's outcome and the subsequent 
market state depend on other subjects' treatments only through the aggregate fraction of treated 
subjects, not through their individual identities or specific assignments. This mean-field 
interference structure is well established in the exposure mapping literature 
\citep{yang2018mean, li2022random, shi2023multiagent} and generalizes the static local 
interference assumption of \citet{johari2022experimental} and \citet{masoero2026multiple} to a 
dynamic setting. It is particularly natural in two-sided markets, where individual outcomes are 
driven by market-level supply and demand conditions rather than by the identity of any particular 
treated peer.

Assumption~\ref{assump: CA} is a standard consistency condition: it requires that potential 
outcomes and states, under the observed treatment history, coincide with the actually observed 
outcomes and states. Assumption~\ref{assump: SRA} is a sequential ignorability condition 
requiring that, given the accumulated data history, the treatment at each period is conditionally 
independent of all remaining potential variables; this is a standard assumption in the dynamic 
treatment literature \citep{wang2018quantile}. Assumption~\ref{assump: PA} ensures that each 
subject has a nonzero probability of receiving either treatment at every period, given the current 
state. Under the randomized design considered here, where subjects are assigned to treatment or 
control at the outset with fixed positive probability, Assumption~\ref{assump: PA} is 
automatically satisfied.

\begin{proposition}
    \label{prop: identification}
    Suppose Assumptions~\ref{assump: ATRA}--\ref{assump: PA} hold, and that there exists a 
    function $R(\cdot)$ such that
    \begin{align}
        \mathbb{E} \Big( Y^i (t, \bar{\bm{a}}_{N,t}) \,\Big|\,
        \bar{\bm{a}}_{N,t}, \{ S(j, \bar{\bm{a}}_{N,j-1}), f(j) \}_{j \leq t} \Big)
        = R \Big(
            a_t^i,\,
            \{ S(j, \bar{\bm{a}}_{N,j-1}), f(j) \}_{j \leq t}
        \Big),
        \label{eq: R assumption}
    \end{align}
    where $f(j)$ denotes the fraction of treated subjects at time $j$. Then the following 
    results hold.

    \noindent\textbf{(i) Identification of $R(\cdot)$ from observed data.} 
    Under the full-treatment policy $\bar{\bm{a}}_{N,t} = \bm{1}_{N,t}$ and the 
    full-control policy $\bar{\bm{a}}_{N,t} = \bm{0}_{N,t}$, the function $R(\cdot)$ 
    can be learned from the observed data as
    \begin{align}
        R \Big(
            a,\, \{ S(j), f(j) \}_{j \leq t}
        \Big)
        = \mathbb{E} \Big(
            N_a^{-1} \sum_{i \in C_a} Y^i(t)
            \,\Big|\,
            \{ S(j), f(j) \}_{j \leq t}
        \Big),
        \label{eq: definition of R}
    \end{align}
    where $N_a$ denotes the number of subjects with $a_t^i = a$, for $a \in \{0, 1\}$.

    \noindent\textbf{(ii) Identification of GATE from observed data.} 
    The GATE defined in~\eqref{eq:GATE_def_individual} is identifiable from the observed 
    summary data. Specifically,
    \begin{eqnarray} \label{eq: result of idetification} 
        && \mathbb{E} \Big( Y^i (t, \bar{\bm{a}}_{N,t}) \Big)
        = \mathbb{E} \Big[
            R \Big(
                a,\, \{ S(j, \bar{\bm{a}}_{N,j-1}), f(j) \}_{j \leq t}
            \Big)
        \Big]          \\
        &=& \mathbb{E} \Big[
            \mathbb{E} \Big\{
                R \Big(
                    a,\, \{ S(j), f(j) \}_{j \leq t}
                \Big)
                \,\Big|\,
                \bar{\bm{A}}_{N,t} = \bar{\bm{a}}_{N,t},\,
                \{ S(j), f(j) \}_{j \leq t},\,
                \{ Y(j) \}_{j < t}
            \Big\}
        \Big]. \nonumber
    \end{eqnarray}
\end{proposition}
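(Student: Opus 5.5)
We prove the two parts in order, both times combining Assumption~\ref{assump: CA} with Assumptions~\ref{assump: SRA}--\ref{assump: PA}. The structural hypothesis \eqref{eq: R assumption} plays the role of an outcome model whose only arguments are the subject's own treatment and the market path.

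\textbf{Part (i).} Fix $a\in\{0,1\}$ and a subject $i\in C_a$, so that $a_t^i=a$ for all $t$ under the IRE design. By Assumption~\ref{assump: ATRA}, $Y^i(t,\bar{\bm a}_{N,t})$ depends on the other subjects' assignments only through $\{f(j)\}_{j\le t}$. Consistency (Assumption~\ref{assump: CA}) lets us replace the potential quantities $Y^i(t,\bar{\bm A}_{N,t})$ and $S(j,\bar{\bm A}_{N,j-1})$ by the observed $Y^i(t)$ and $S(j)$. Sequential ignorability (Assumption~\ref{assump: SRA}) allows conditioning on the realized assignment without distorting the conditional law of the potential outcomes given the observed history. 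Hence
\[
\mathbb{E}\{Y^i(t)\mid \{S(j),f(j)\}_{j\le t}\}=R(a,\{S(j),f(j)\}_{j\le t}).
\]
Averaging over $i\in C_a$ gives \eqref{eq: definition of R}, because the right-hand side does not depend on $i$. Positivity (Assumption~\ref{assump: PA}) ensures both $C_0$ and $C_1$ are nonempty, so both arms $a=0,1$ are identified. The function $R$, having been learned in this way, can then be evaluated at the market paths generated by the full-treatment policy $\bm 1_{N,t}$ (where $f\equiv 1$) and the full-control policy $\bm 0_{N,t}$ (where $f\equiv 0$). This last substitution is legitimate only if $R$ is a single function valid across treated fractions; that is exactly what \eqref{eq: R assumption} postulates.

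\textbf{Part (ii).} The first equality in \eqref{eq: result of idetification} is iterated expectation: take expectations of \eqref{eq: R assumption} over $\{S(j,\bar{\bm a}_{N,j-1})\}_{j\le t}$, noting that $f(j)$ is deterministic given $\bar{\bm a}_{N,t}$. For the second equality we must express the law of the counterfactual state path $\{S(j,\bar{\bm a}_{N,j-1})\}_{j\le t}$ through observed conditionals. We proceed by a g-formula induction on $j$. Suppose the joint law of $\{S(k,\bar{\bm a}_{N,k-1})\}_{k\le j}$ has been identified. Then the conditional law of $S(j+1,\bar{\bm a}_{N,j})$ given that past equals, by Assumptions~\ref{assump: SRA} and \ref{assump: CA}, the observed conditional law of $S(j+1)$ given $\bar{\bm A}_{N,j}=\bar{\bm a}_{N,j}$, $\{S(k),f(k)\}_{k\le j}$ and $\{Y(k)\}_{k\le j}$. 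Assumption~\ref{assump: ATRA} reduces the dependence on $\bar{\bm a}_{N,j}$ to $f(\cdot)$, and Assumption~\ref{assump: PA} guarantees this conditioning event has positive probability. Composing these conditionals and integrating $R$ against the resulting path law yields the nested expectation in \eqref{eq: result of idetification}. Summing over $t$ and $i$, and taking the difference between the cases $\bar{\bm a}=\bm 1$ and $\bar{\bm a}=\bm 0$, identifies the GATE \eqref{eq:GATE_def_individual}.

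\textbf{Main obstacle.} The hard step is the induction in part (ii). Two points need care. First, under the fully treated or fully controlled policy the fraction $f(j)\in\{0,1\}$ differs from the experimental fraction $N_1/N$. Consequently the observed conditionals given $\bar{\bm A}_{N,t}=\bar{\bm a}_{N,t}$ are not literally observed; they must be obtained by extrapolation through the structure of $R$ and the state transition. I would try to make this explicit by arguing that, under Assumption~\ref{assump: ATRA}, the state transition is a fixed function of $(f, \text{past})$, so that positivity is only needed in the mean-field sense. Second, sequential ignorability has to be applied to group-level assignments, which are fixed at the outset. I would handle this by noting that assignment at time $0$ is independent of all potential variables, which implies Assumption~\ref{assump: SRA} at every $t$.
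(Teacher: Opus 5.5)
Your route is the paper's route. Part~(i) is its Step~1: tower property, consistency to swap potential for observed quantities, Assumptions~\ref{assump: SRA}--\ref{assump: PA} to put the realized assignment into the conditioning set, then averaging over $C_a$. Part~(ii) is its Step~2, which likewise ``iteratively replaces counterfactual variables with observed ones'' in g-formula fashion. The trouble is that the obstacle you flag is real, and your argument does not close it.

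\textbf{Where it fails.} In part~(ii) you assert that Assumption~\ref{assump: PA} gives the event $\{\bar{\bm A}_{N,j}=\bar{\bm a}_{N,j}\}$ positive probability. It does not. Assumption~\ref{assump: PA} is a per-subject marginal condition on $A^i(t)$, while the design treats exactly $N_1$ of the $N$ subjects. So for $\bar{\bm a}_{N,j}=\bm{1}_{N,j}$ or $\bm{0}_{N,j}$ this event has probability zero whenever $N_0,N_1\ge 1$, as your last paragraph in effect concedes. The conditionals you compose are therefore conditionals on null events, and they are not functionals of the observed-data law.

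\textbf{Your proposed fix.} Collapsing to $f$ via Assumption~\ref{assump: ATRA} and requiring only ``mean-field'' positivity fails for the same reason. Every observed $f(j)$ equals $N_1/N\in(0,1)$, so the data carry no information about the state transition at $f\equiv 1$ or $f\equiv 0$.

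\textbf{Part~(i).} The same point undercuts your closing remark there. That \eqref{eq: R assumption} posits a single function $R$ does not identify $R$ at arguments outside the support of the data. Your argument, like the paper's chain (i)--(v), establishes \eqref{eq: definition of R} only at the experimental fraction.

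\textbf{What would close the gap.} You need an ingredient not contained in Assumptions~\ref{assump: ATRA}--\ref{assump: PA}. One option is that $R$ depends on $f$ only through the state path (as in \eqref{eq:model_Y}), together with an identified parametric $f$-dependence of the state transition (as in \eqref{eq:model_state_within_group_all}); that coefficient is itself identified only if $f$ varies across days or cities. Another is a design in which $f$ varies enough to pin down the transition at $f\in\{0,1\}$.

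In fairness, the paper's proof makes the identical move (``under Assumption~\ref{assump: SRA} and~\ref{assump: PA}, the event $(A_2^1,\dots,A_2^N)=(a_2^1,\dots,a_2^N)$ can be included in the conditioning set'') without addressing this. Your write-up is therefore no less complete than the paper's, but as written neither proves part~(ii) for the full-treatment and full-control policies.
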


Proposition \ref{prop: identification} shows that the GATE can be presented as a function of the observed summary data, making it identifiable without access to the individual-level data. When $\bar{\bm{a}}_{N,t} = \bm{1}_{N,t}$, the corresponding ratio is $f(t) = 1$; when $\bar{\bm{a}}_{N,t} = \bm{0}_{N,t}$, we have $f(t) = 0$.


\section{Models for the outcome and state variables}
\label{sec:models}

We introduce the IRE-VCDP model to jointly characterize outcome and state dynamics in 
two-sided markets. The state variables explicitly encode the platform's demand--supply system. 
For each group $C \in \{C_0, C_1\}$, let $D_{d,C}(t)$ denote group-specific demand and 
$S_d(t)$ denote platform-wide supply, which is shared across groups since drivers serve all 
users regardless of group membership. We define the group-level state vector as 
$S_{d,C}(t+1) = (D_{d,C}(t+1), S_d(t+1))^\top$.

The outcome and state evolution are modeled as
\begin{eqnarray*}
    Y_{d,C}(t) &=& g_C\big(t,\, X_{d,C}(t),\, S_{d,C}(t)\big) + \epsilon_{d,C}(t), \\
    S_{d,C}(t+1) &=& G_C\big(t,\, \widetilde{X}_{d,C}(t),\, S_{d,C}(t),\, f_d(t)\big) 
    + E_{d,C}(t+1),
\end{eqnarray*}
where $g_C(\cdot)$ and $G_C(\cdot)$ are unknown regression functions. The outcome is driven 
by group characteristics $X_{d,C}(t)$ and the current demand--supply state $S_{d,C}(t)$, 
while the state transition depends additionally on the fraction of treated subjects $f_d(t)$, 
which serves as the channel through which the treatment propagates across groups. The error 
terms $\epsilon_{d,C}(t)$ and $E_{d,C}(t+1)$ are mutually independent; however, we allow 
correlation between $\epsilon_{d,C_0}(t)$ and $\epsilon_{d,C_1}(t)$ to accommodate shared 
latent shocks at the platform level. The inclusion of covariates $X_{d,C}(t)$ and 
$\widetilde{X}_{d,C}(t)$ improves estimation efficiency \citep{su2021model}.

To enable tractable estimation and interpretation, we adopt the following linear approximations:
{\small \begin{eqnarray}
    \label{eq:model_Y}
   && Y_{d,C}(t) = \alpha_{0,C}(t) + \alpha_{1,C}^\top(t)\, X_{d,C}(t) 
    + \alpha_{2,C}^\top(t)\, S_{d,C}(t) + \epsilon_{d,C}(t), \\
    \label{eq:model_state_within_group_all}
    && S_{d,C}(t+1) = \gamma_{0,C}(t) + f_d(t)\,\gamma_{1,C}(t) 
    + \Phi_{0,C}(t)\,\widetilde{X}_{d,C}(t) 
    + \Phi_{1,C}(t)\,S_{d,C}(t) + E_{d,C}(t+1).
\end{eqnarray} 
} This specification is motivated by both structural and empirical considerations. Structurally, 
it reflects the platform's operational asymmetry: demand is group-specific, arising from 
user-level randomization, while supply is shared across groups. This asymmetry is the mechanism 
through which cross-group interference operates --- a demand shock in one group affects service 
availability for the other through the common supply pool, and consequently affects outcomes. 
Empirically, linear varying coefficient models of this form have been shown to perform well in 
ridesharing systems \citep{luo2024policy, li2024evaluating}, providing a parsimonious yet 
flexible representation of dynamic demand--supply interactions.

A key feature of two-sided markets is that the imbalance between supply and demand is a primary 
determinant of outcomes \citep{chen2024real}. When supply is sufficient, demand and GMV exhibit 
an approximately linear relationship. Under excess demand, however, a fraction of ride requests 
cannot be fulfilled, and this congestion breaks down the linear relationship. To capture this 
nonlinearity, we incorporate the supply--demand gap
\begin{eqnarray}
    \label{eq:demand_supply_gap}
    G_d(t) = \max\{\mathrm{Demand}_d(t) - \mathrm{Supply}_d(t),\, 0\}
\end{eqnarray}
as a component of $X_{d,C}(t)$, where $\mathrm{Demand}_d(t)$ is the number of ride requests 
and $\mathrm{Supply}_d(t)$ is the number of rides that can be fulfilled given total available 
driver hours, both in city $d$ at time $t$. The gap $G_d(t)$ captures congestion effects that 
are particularly pronounced during peak periods and play a crucial role in explaining GMV 
dynamics over the experimental horizon, as confirmed by our empirical analysis.

Based on models \eqref{eq:model_Y}--\eqref{eq:model_state_within_group_all}, the GATE in \eqref{eq:GATE_def_individual} admits a closed-form representation as a function of the model coefficients, which greatly facilitates estimation and inference.
The coefficients in \eqref{eq:model_Y}--\eqref{eq:model_state_within_group_all} can be consistently estimated using the observed aggregate data. The GATE estimator is then obtained by plugging these estimates into \eqref{eq: result of idetification}, with $f(t)=1$ corresponding to full-scale treatment and $f(t)=0$ corresponding to the control.

\begin{proposition}
	\label{prop: ATE expression}
Suppose the conditions in Proposition \ref{prop: identification} holds, and the potential outcomes admits the form of models in \eqref{eq:model_Y}-\eqref{eq:model_state_within_group_all}, then we have
{\small
\begin{align}
\label{eq:GATE_coe_expression}
	&\text{GATE} =
\sum_{t=1}^m 
(  \alpha_{0,C_1} (t) -  \alpha_{0,C_0}(t) ) 
+\sum_{t=1}^m \Big[ \bar X_N ( \alpha_{1,C_1} (t)  -  \alpha_{1,C_0} (t) )  \Big]+ \\ 
&\sum_{t=1}^m \Big[
\alpha_{2,C_1} (t)^\top   \Big(\prod_{l=1}^{t-1} \Phi_{1,C_1}(l) \mathbb{E}(S_{C_1}(1))+ \sum_{k=1}^{t-1} \{ \prod_{l=k+1}^{t-1} \Phi_{1,C_1}(l)
\cdot[ \gamma_{0, C_1}(k)+\gamma_{1,C_1} (k)+\Phi_{0, C_1} (t-1) \bar{\tilde{X}}_N(t) ] \}\Big)
 \nonumber \\
&-
\alpha_{2,C_0} (t)^\top \Big(\prod_{l=1}^{t-1} \Phi_{1, C_0}(l) \mathbb{E}(S_{C_0}(1))+ \sum_{k=1}^{t-1} \{ \prod_{l=k+1}^{t-1} \Phi_{1, C_0}(l) 
\cdot[\gamma_{0, C_0}(k)+\Phi_{0, C_0} (t-1) \bar{\tilde{X}}_N(t) ]\}\Big)
\Big],  \nonumber
\end{align}
}
where $\bar X_N=\mathbb{E}( N^{-1} \sum_{i=1}^N X^i(t)  ) $ and $\bar{\tilde{X}}_N(t) = \mathbb{E} ( N^{-1}  \sum_{i=1}^N \tilde{X}^i(t) )$, 
$\mathbb{E}(S(1))$ denotes the demand and supply of all subjects, and $\prod_{l=k+1}^{t-1} \Phi_{1,C_0}(l)=\prod_{l=k+1}^{t-1} \Phi_{1,C_1}(l) =1$, if $t-1 < k+1$.
\end{proposition}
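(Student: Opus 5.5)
The plan is to start from Proposition~\ref{prop: identification}(ii). It lets us write each of the two terms in \eqref{eq:GATE_def_individual} as $\sum_{t=1}^m \mathbb{E}[R(a,\{S(j),f(j)\}_{j\le t})]$, where $(a,f)=(1,1)$ under $\bm{1}_{N,t}$ and $(a,f)=(0,0)$ under $\bm{0}_{N,t}$. Under the linear specification \eqref{eq:model_Y}, $R$ for group $C_a$ is the conditional mean $\alpha_{0,C_a}(t)+\alpha_{1,C_a}^\top(t)X_{C_a}(t)+\alpha_{2,C_a}^\top(t)S_{C_a}(t)$. Since the errors have mean zero, taking outer expectations gives three pieces for each $t$.

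\begin{itemize}
\item The intercept difference $\alpha_{0,C_1}(t)-\alpha_{0,C_0}(t)$.
\item The covariate term. Because assignment is randomized at the outset, the group-averaged covariates in both arms have the same mean as the population average $\bar X_N$. This gives $\bar X_N(\alpha_{1,C_1}(t)-\alpha_{1,C_0}(t))$.
\item The state term $\alpha_{2,C_a}^\top(t)\,\mathbb{E}\{S_{C_a}(t)\}$ evaluated under the counterfactual policy.
\end{itemize}

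The key step is computing $\mathbb{E}\{S_{C_a}(t)\}$ under the full-treatment and full-control policies. I will take expectations in \eqref{eq:model_state_within_group_all} with $f(k)\equiv 1$ (respectively $0$), using $\mathbb{E}E_{C}(k+1)=0$. This gives the affine recursion
\[
\mu_a(k+1)=\gamma_{0,C_a}(k)+a\,\gamma_{1,C_a}(k)+\Phi_{0,C_a}(k)\bar{\tilde X}_N+\Phi_{1,C_a}(k)\mu_a(k),
\]
with $\mu_a(1)=\mathbb{E}S_{C_a}(1)$. Unrolling it by induction on $t$ (the discrete variation-of-constants formula) yields
\[
\mu_a(t)=\prod_{l=1}^{t-1}\Phi_{1,C_a}(l)\,\mu_a(1)+\sum_{k=1}^{t-1}\prod_{l=k+1}^{t-1}\Phi_{1,C_a}(l)\,[\gamma_{0,C_a}(k)+a\gamma_{1,C_a}(k)+\Phi_{0,C_a}(k)\bar{\tilde X}_N(k)],
\]
with the empty-product convention stated in the proposition. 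The products must be ordered as left multiplication, with later indices on the left. Setting $a=0$ kills the $\gamma_1$ term, which matches the asymmetry in \eqref{eq:GATE_coe_expression}.

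Summing over $t$, taking the difference of the two arms, and dividing by $N$ then gives the displayed formula. The per-subject average equals the group-average model evaluated under the counterfactual policy, because under the full policy every subject shares the arm's conditional mean by Assumption~\ref{assump: ATRA}.

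I expect the main obstacle to be justifying that the coefficients fitted on group $C_a$ under the partial-treatment experiment remain valid for the whole market under $f\equiv a$. In particular, the state dynamics must be extrapolated from the observed fraction $f_d(t)=N_1/N$ to $f=1$ or $0$. The linearity of \eqref{eq:model_state_within_group_all} in $f$ is what licenses this extrapolation, together with Assumption~\ref{assump: ATRA}, which says dependence on others' treatments is only through $f$. This step would be argued first.

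A second issue is that the paper's displayed formula writes $\Phi_{0,C}(t-1)\bar{\tilde X}_N(t)$ inside the sum over $k$. The recursion naturally produces $\Phi_{0,C}(k)\bar{\tilde X}_N(k)$. I would derive the index-$k$ version and note that it coincides with the stated form under the paper's convention; if it does not, the stated formula should be read with the index-$k$ terms.
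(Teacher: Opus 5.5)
Your proposal is correct and follows the paper's proof. The paper writes each per-period difference $\zeta(t)$ as the intercept gap, plus $(\alpha_{1,C_1}(t)-\alpha_{1,C_0}(t))^\top\bar X_N$, plus $\alpha_{2,C_a}(t)^\top$ times the expected counterfactual state; it then unrolls the linear state recursion with $f\equiv 1$ or $f\equiv 0$, and takes the validity of \eqref{eq:model_Y}--\eqref{eq:model_state_within_group_all} under the counterfactual policies (your ``main obstacle'') directly as a hypothesis rather than deriving it. The paper's derivation also produces the index-$k$ term $\Phi_{0,C}(k)$ times the covariate mean at time $k$ inside the sum, so your reading of $\Phi_{0,C}(t-1)\bar{\tilde X}_N(t)$ in the displayed statement as a typo agrees with the paper's own proof.
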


From Proposition~\ref{prop: ATE expression}, the GATE can be estimated by plugging the estimated coefficients from models \eqref{eq:model_Y}--\eqref{eq:model_state_within_group_all} into the closed-form expression. The resulting GATE admits a natural decomposition into three components: (i) the direct effect of treatment on the outcome at the same time point; (ii) the interaction between treatment and predictive covariates; and (iii) the interference effects across subjects and over time.

If the predictive covariates are centered to have zero mean, the expression of the \text{GATE} in Proposition~\ref{prop: ATE expression} further simplifies. In practice, this can be achieved by centering the covariates across all subjects (rather than within each group). While such centering simplifies the expression, the inclusion of these covariates also improves estimation efficiency by reducing variance.


To better illustrate the role of interference, we consider a naive treatment effect defined as the difference in mean outcomes between the two groups:
$
\tau = \mathbb{E}\Big( \sum_{t=1}^m Y_{C_1}(t) \Big) 
- \mathbb{E}\Big( \sum_{t=1}^m Y_{C_0}(t) \Big).
$
This estimand ignores interference across subjects.
Under models \eqref{eq:model_Y}--\eqref{eq:model_state_within_group_all}, $\tau$ can also be expressed in terms of the model coefficients, capturing both the direct treatment effect and its interactions with predictive and state variables. It can be viewed as a dynamic extension of the regression-adjusted estimator of \citet{li2020rerandomization}. 
However, $\tau$ fails to account for temporal and cross-subject interference. Consequently, it coincides with the \text{GATE} only when the expected demand and supply processes are identical under full-scale treatment and under control—an assumption that is typically violated in practice.
See Appendix~\ref{app:Formulation of Straightforward Treatment Effect} for detailed derivations and discussion.


\section{Estimation and Inference}

This section presents the estimation procedure and statistical inference framework for the
hypothesis~\eqref{eq:hypothesis}, based on
models~\eqref{eq:model_Y}--\eqref{eq:model_state_within_group_all}.

{\bf Estimation}. 
Estimation of the GATE proceeds in two steps. In the first step, the time-varying coefficients
in models~\eqref{eq:model_Y}--\eqref{eq:model_state_within_group_all} are estimated by ordinary
least squares at each time point. Specifically, for $t = 1, \ldots, m$ (and $t = 1, \ldots, m-1$
for the state equation), define
\begin{align*}
    \widehat{\theta}_{C}(t)
        &= \arg\min_{\theta_{C}(t)}
           \sum_{d=1}^n
           \Bigl( Y_{d,C}(t) - Z_{d,C}(t)^\top \theta_{C}(t) \Bigr)^2, \\
    \widehat{\Theta}_C^{(\nu)}(t)
        &= \arg\min_{\Theta^{(\nu)}(t)}
           \sum_{d=1}^n
           \Bigl( S_{d,C}^{(\nu)}(t+1) - \widetilde{Z}_{d,C}(t)^\top \Theta_C^{(\nu)}(t) \Bigr)^2,
\end{align*}
where $C \in \{C_0, C_1\}$,
$Z_{d,C}(t) = [1,\, X_{d,C}(t)^\top,\, S_{d,C}(t)^\top]^\top$,
$S_{d,C}^{(\nu)}(t+1)$ denotes the $\nu$-th component of $S_{d,C}(t+1)$, and
$\widetilde{Z}_{d,C}(t) = [1,\, f(t),\, \widetilde{X}_{d,C}(t)^\top,\, S_{d,C}(t)^\top]^\top$.

In the second step, kernel smoothing is applied to the initial estimates to reduce variance,
enforce temporal smoothness, and improve signal detection under weak effect
sizes~\citep{zhu2014spatially}. For a kernel function $K(\cdot)$ with bandwidth $h$, the smoothed
estimators are defined as
\begin{equation}
\label{eq:coe_est}
    \widetilde{\theta}_{C}(t)
        = \sum_{j=1}^{m} \omega_{j,h}(t)\,\widehat{\theta}_{C}(j)
    \qquad
    \mbox{and}  \qquad
    \widetilde{\Theta}^{(\nu)}_C(t)
        = \sum_{j=1}^{m-1} \omega_{j,h}(t)\,\widehat{\Theta}^{(\nu)}_C(j),
\end{equation}
where
$\omega_{j,h}(t) = K\!\left((j-t)/(mh)\right)\big/\sum_{k=1}^{m} K\!\left((k-t)/(mh)\right)$
are normalized kernel weights. We adopt the Gaussian kernel $K(t) = \exp(-t^2)$. The smoothed
estimators are weighted averages of the initial pointwise estimates, with weights decaying as a
function of temporal distance. This yields continuous, stable coefficient trajectories that are
particularly well-suited for settings in which the signal is weak or the underlying parameters
evolve slowly over time.

The GATE estimator $\widehat{\text{GATE}}$ is then obtained by substituting
$\widetilde{\theta}_{C}(t)$ and $\widetilde{\Theta}^{(\nu)}_C(t)$ into the closed-form expression
of Proposition~\ref{prop: ATE expression}, with the population means of the predictive covariates
and the initial state variable replaced by their empirical counterparts across all subjects
(see Appendix~\ref{app:Formulation of GATE Estimator}).

{\bf Inference}. 
To test the hypothesis~\eqref{eq:hypothesis}, we use the test statistic
$T = \widehat{\text{GATE}}$.
Under the null hypothesis, $T$ is expected to be non-positive or close to zero, so we reject
$H_0$ when $T$ exceeds a sufficiently large positive threshold. Deriving the exact limiting
distribution of $T$ is analytically intractable due to its complex functional dependence on the
estimated model parameters. We therefore approximate the null distribution of $T$ via a multiplier
bootstrap procedure adapted from~\citet{chernozhukov2013gaussian}.

Specifically, after estimating the outcome and state models, fitted values and residuals are computed within each group. The residuals are then perturbed using independent standard normal multipliers to generate bootstrap pseudo-data, thereby mimicking the variability of the original observations.
For each bootstrap sample, the full estimation procedure is repeated using the pseudo-data to obtain a bootstrap GATE estimate $\widehat{\text{GATE}}^{\,b} $. The bootstrap statistic is defined as 
$T^b = \widehat{\text{GATE}}^{\,b} - \widehat{\text{GATE}}.$
Repeating this procedure over $B$ replications provides an empirical approximation of the null limit distribution. Then we reject $H_0$ at significance level $\alpha$ if $T$ excedds the $1-\alpha$ quantile of $\{T_b\}_{b=1}^B$.

The complete estimation and inference procedure is summarized in
Algorithm~\ref{alg:estimation and testing} in Appendix \ref{sec:algorithm}. The bootstrap-based critical value enables valid
inference without requiring a closed-form asymptotic distribution.
The theoretical validity of Algorithm~\ref{alg:estimation and testing} is established below.

\begin{theorem}
\label{thm:bootstrap_consistency_QTE}
Suppose Assumptions~\ref{assump:kernel}--\ref{asmp:st1} in the supplement hold.
If $h = o(n^{-1/4})$, $m \asymp n^{c_2}$ for some $\tfrac{1}{2} < c_2 < \tfrac{3}{2}$,
and $mh \to \infty$ as $n \to \infty$, then
\begin{equation*}
    \sup_{z}
    \left|
        \mathbb{P}\!\left(T - \mathrm{GATE} \leq z\right)
        -
        \mathbb{P}\!\left(T^b - T \leq z \mid \mathrm{Data}\right)
    \right|
    \;\leq\;
    \widetilde{C}
    \!\left(
        \sqrt{n}\,h^2
        + \sqrt{n}\,m^{-1}
        + n^{-1/8}
    \right)
\end{equation*}
holds with probability approaching $1$, for some positive constant $\widetilde{C}$.
\end{theorem}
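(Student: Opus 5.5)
The plan is a linearization argument: show that $\sqrt{n}(T-\mathrm{GATE})$ is, up to small remainders, a normalized sum of independent day-level terms, and that the multiplier bootstrap statistic is the same sum with residuals perturbed by Gaussian weights. The Gaussian approximation theory of \citet{chernozhukov2013gaussian} then makes both laws close to a common Gaussian limit. Throughout, the bootstrap comparison is read as the centered statistic $T^b=\widehat{\mathrm{GATE}}^{\,b}-\widehat{\mathrm{GATE}}$ defined in the algorithm.

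\textbf{Step 1 (pointwise expansion).} For each $t$ and each group $C$, the least-squares estimator satisfies
\[
\widehat\theta_C(t)-\theta_C(t)=\Big(n^{-1}\sum_d Z_{d,C}(t)Z_{d,C}(t)^\top\Big)^{-1}n^{-1}\sum_d Z_{d,C}(t)\epsilon_{d,C}(t),
\]
and similarly for $\widehat\Theta^{(\nu)}_C(t)$. Under the moment and eigenvalue conditions in the supplement, I will replace the sample Gram matrices by their population versions. The resulting remainders should be $O_p(\log m/n)$ uniformly in $t$, using a union bound over $t\le m$ together with matrix Bernstein.

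\textbf{Step 2 (smoothing bias).} Write $\widetilde\theta_C(t)-\theta_C(t)=\sum_j\omega_{j,h}(t)\{\widehat\theta_C(j)-\theta_C(j)\}+\{\sum_j\omega_{j,h}(t)\theta_C(j)-\theta_C(t)\}$. Assuming twice-differentiable coefficient functions and a symmetric kernel, the deterministic term has two parts. The first is $O(h^2)$ from the usual Taylor expansion. The second is $O(m^{-1})$, coming from the Riemann-sum discretization and the boundary normalization; $mh\to\infty$ keeps this part under control. These produce the $\sqrt{n}h^2+\sqrt{n}m^{-1}$ terms in the bound after scaling by $\sqrt n$.

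\textbf{Step 3 (delta method for the GATE map).} Proposition~\ref{prop: ATE expression} writes the GATE as a smooth polynomial functional $\Psi$ of all the coefficient curves. These include products $\prod_l\Phi_{1,C}(l)$ of length up to $m$, so I need a stability assumption, such as $\sup_t\|\Phi_{1,C}(t)\|\le\rho<1$, so that the derivatives of $\Psi$ are summable in $t$. Under that assumption, a first-order Taylor expansion gives
\[
T-\mathrm{GATE}=n^{-1}\sum_{d=1}^n\xi_d+\text{bias}+R_n,
\]
where $\xi_d$ collects the kernel-weighted influence terms $\sum_t w_t^\top\Sigma_t^{-1}Z_{d}(t)\epsilon_d(t)$, including the replacement of $\mathbb E S(1)$ and $\bar X_N$ by their empirical means. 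I expect this to be the main obstacle: bounding the second-order remainder $R_n$ uniformly, given that the number of parameters grows with $m\asymp n^{c_2}$. I would first try exploiting the geometric decay of the products to reduce the problem to a bounded effective dimension. I would then use that the smoothed estimators have variance of order $(nmh)^{-1}$ pointwise, so the quadratic terms are $O_p(\log n/(nmh)^{1/2}\cdot n^{-1/2})$, which is $o(n^{-1/2}n^{-1/8})$ given the constraints on $c_2$ and $h$.

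\textbf{Step 4 (Gaussian and bootstrap coupling).} Since days are independent, I apply the Gaussian approximation to $n^{-1/2}\sum_d\xi_d$ and to the Gaussian multiplier version $n^{-1/2}\sum_d e_d\hat\xi_d$. For the multiplier version, rerunning the estimation on pseudo-data is linear in the perturbed residuals up to the same type of remainder, because the design is held fixed. Chernozhukov--Chetverikov--Kato give a Kolmogorov distance bound of order $n^{-1/8}$, up to logarithmic factors, for both. They also give a Gaussian-comparison bound controlled by $\|\hat\Sigma_\xi-\Sigma_\xi\|$, which is $O_p(n^{-1/2}\cdot\text{polylog})$ using consistency of the residuals from Steps 1--2. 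An anti-concentration inequality turns the remainders and the bias into Kolmogorov-distance errors. Combining the pieces by the triangle inequality yields the bound $\widetilde C(\sqrt n h^2+\sqrt n m^{-1}+n^{-1/8})$ with probability approaching one.
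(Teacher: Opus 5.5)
Your route differs from the paper's. The paper never linearizes. It writes $m^{-1}(T-\mathrm{GATE})$ and the bootstrap statistic as the same nonlinear map $F_{\mathrm{GATE}}$, evaluated at the $p_x$-dimensional normalized sums $X=n^{-1/2}\sum_d x_d$ and $W=n^{-1/2}\sum_d w_d$ (with $p_x\asymp m\asymp n^{c_2}$). The plug-in parameters are $(\theta_{s,C},\Theta_{s,C})$ for the statistic and $(\widetilde\theta_C,\widetilde\Theta_C)$ for the bootstrap. It then compares both to oracle versions at the true parameters via the triangle inequality. The Gaussian approximation of $F(X)$ by $F(W)$ and the two parameter-replacement terms are each bounded by $\widetilde C n^{-1/8}$, citing Lemma 4 and Theorem 2 of Luo et al.\ (2024), a CCK-type result for smooth functions of high-dimensional sums. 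The deterministic term $F(0;\theta_s)-F(0;\theta)$ gives $\sqrt n h^2+\sqrt n m^{-1}$, which is your Step 2.

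You instead collapse to a scalar sum of day-level influence terms before approximating. This has two advantages:
\begin{itemize}
\item The high-dimensional CCK machinery becomes unnecessary; a univariate Berry--Esseen bound gives $n^{-1/2}$ for that piece, without the logarithmic factors you mention.
\item Replacing sample Gram matrices by population ones makes the summands exactly independent across days. The paper's weights $B_{d,j,C}(t)$, built from pooled sample Gram matrices and lagged states, are not.
\end{itemize}
The price is that you must control the second-order remainder of the long matrix products yourself, which the paper delegates to Luo et al. Your plan for this (uniform-in-$t$ consistency plus geometric decay) works. The stability condition you say you need is already Assumption~\ref{asmp:st1}, $\|\Phi_{1,C}(t)\|_\infty\le c_1<1$.

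Two points need repair.

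\textbf{Plug-in means.} In Step 3 you put the fluctuation of the plug-in means ($n^{-1}\sum_d X_d(t)$, $n^{-1}\sum_d H_{d,C}(t)$, the initial state) into $\xi_d$. In Step 4 you correctly note that the algorithm's bootstrap keeps design and covariates fixed and is linear in the perturbed residuals only. So the bootstrap has no counterpart of that component. Its variance is of the same order as the residual part whenever, e.g., $\alpha_{1,C_1}\neq\alpha_{1,C_0}$. Hence $\|\widehat\Sigma_\xi-\Sigma_\xi\|$ does not vanish and your covariance comparison fails. The paper sidesteps this by writing $F_{\mathrm{GATE}}$ with the population means $\mathbb{E}(n^{-1}\sum_i X^i(t))$ and $\mathbb{E}(n^{-1}\sum_i H^i_C(t))$, i.e.\ treating them as known. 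You must adopt the same convention, center the covariates, or perturb the covariates in the bootstrap as well.

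\textbf{Variance scaling.} Your remainder bound uses a pointwise variance of order $(nmh)^{-1}$, which presumes errors independent across $t$. The paper explicitly allows within-day dependence. Moreover, $\sqrt n h^2$ and $\sqrt n m^{-1}$ are the correct bias-to-noise ratios only if the linear part of $m^{-1}(T-\mathrm{GATE})$ has exact order $n^{-1/2}$. With errors independent over $t$ it would be of order $(nm)^{-1/2}$, and the ratio would become $\sqrt{nm}h^2+\sqrt{n/m}$. So you should state a lower bound on $\mathrm{Var}(m^{-1}\xi_d)$, which your anti-concentration step needs anyway. Then bound the quadratic remainder using the $O(n^{-1})$ pointwise variance: it is $O_p(\log m/n)$, which remains negligible.
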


Theorem~\ref{thm:bootstrap_consistency_QTE} establishes that the bootstrap distribution of
$T^b - T$ consistently approximates the sampling distribution of $T - \mathrm{GATE}$, with an
explicit convergence rate. The bound comprises three terms: a smoothing bias $O(\sqrt{n}\,h^2)$
from kernel estimation, a discretization error $O(\sqrt{n}\,m^{-1})$ due to the finite number of
time intervals, and the Gaussian approximation error $O(n^{-1/8})$ inherited from the theory
of~\citet{chernozhukov2013gaussian}. The bandwidth conditions $h = o(n^{-1/4})$ and $mh \to \infty$
balance these three sources of error and ensure that the overall bound vanishes as $n \to \infty$.
The proof is given in Appendix~\ref{app:bootstrap_consistency_QTE}.

\section{Experiments}\label{sec:experiments}


In this section, we conduct real-data–based simulation studies to assess the finite-sample performance of the proposed test for \eqref{eq:hypothesis}. We then apply the method to real data from a leading ridesharing company. We compare our procedure against three alternatives: (i) a two-sample $t$-test, which is currently used within the company; (ii) a Difference-in-Differences (DiD) estimator, which compares pre-post outcome changes between treatment and control groups \citep{ashenfelter1978estimating,callaway2021difference}; and (iii) a Direct Effect (DE) estimator, which estimates the treatment effect by extending the regression adjustment approach in \cite{li2020rerandomization} to a dynamic setting. The outcome is the GMV, the state variables are the number of order requests and the drivers's total online time within each $t$. Detailed data-generating process and additional results are provided in Appendix~\ref{appendix:additional_experiment}.

\begin{example}[{\bf Real-data-based simulations within cities}]
\label{ex:simulation_single_city}

We construct a simulation environment based on real A/A experimental data from a ridesharing
platform across three cities of different sizes: a large city (5--10 million residents), a medium
city (1--5 million), and a small city (0.5--1 million). In an A/A experiment, identical policies
are applied to both groups, so the true GATE is zero by construction. Each day is divided into
$m = 24$ time intervals, and the number of experimental days is set to $n \in \{14, 28\}$. To
assess power, we introduce a tunable parameter $\eta \in [0,12]$ that scales the treatment effect,
with $\eta = 0$ corresponding to the null and larger values of $\eta$ yielding stronger effects.
All results are based on 1{,}000 simulation replications, each using 500 bootstrap samples.

Figure~\ref{fig:city_power} summarizes the results. The left panel shows that the bootstrap
distribution closely tracks the empirical null distribution of $\widehat{\text{GATE}}$ across
replications, confirming the accuracy of the bootstrap approximation established in
Theorem~\ref{thm:bootstrap_consistency_QTE}. 
The right panel reports the empirical rejection rates of the proposed IRE-VCDP-based test, the two-sample $t$-test, the DiD estimator, and the DE estimator. All methods control the type I error at the nominal 5\% level; however, the proposed test demonstrates superior power, with particularly notable gains for small effect sizes.
Figure~\ref{fig:city_pvalues} in the Appendix further confirms that the empirical $p$-value
distributions under the null are approximately uniform across all city sizes and observation
windows, indicating reliable Type~I error control.

\end{example}

\begin{figure}[htbp]
    \centering
    \includegraphics[width=0.44\columnwidth]{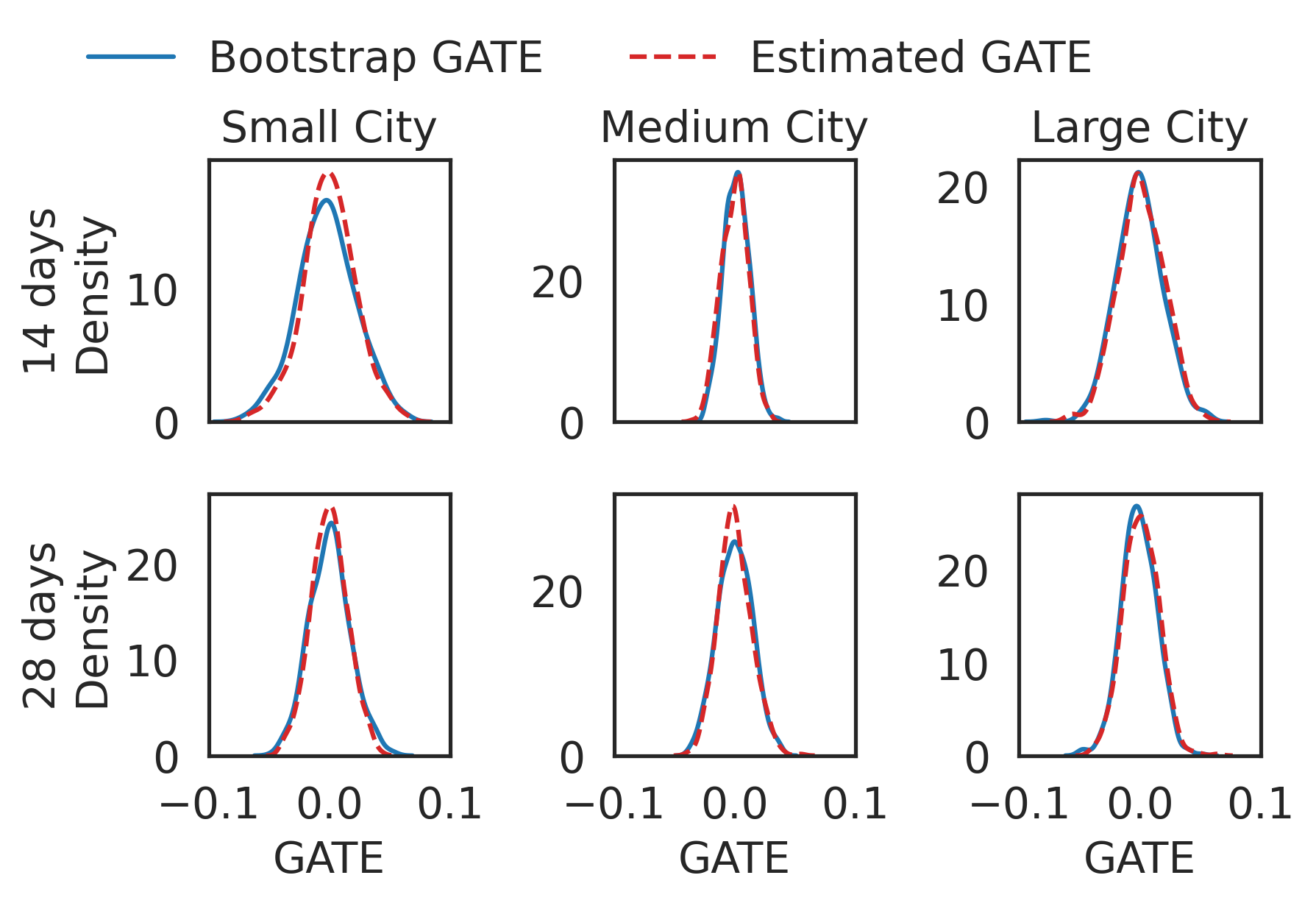}
    \includegraphics[width=0.48\columnwidth]{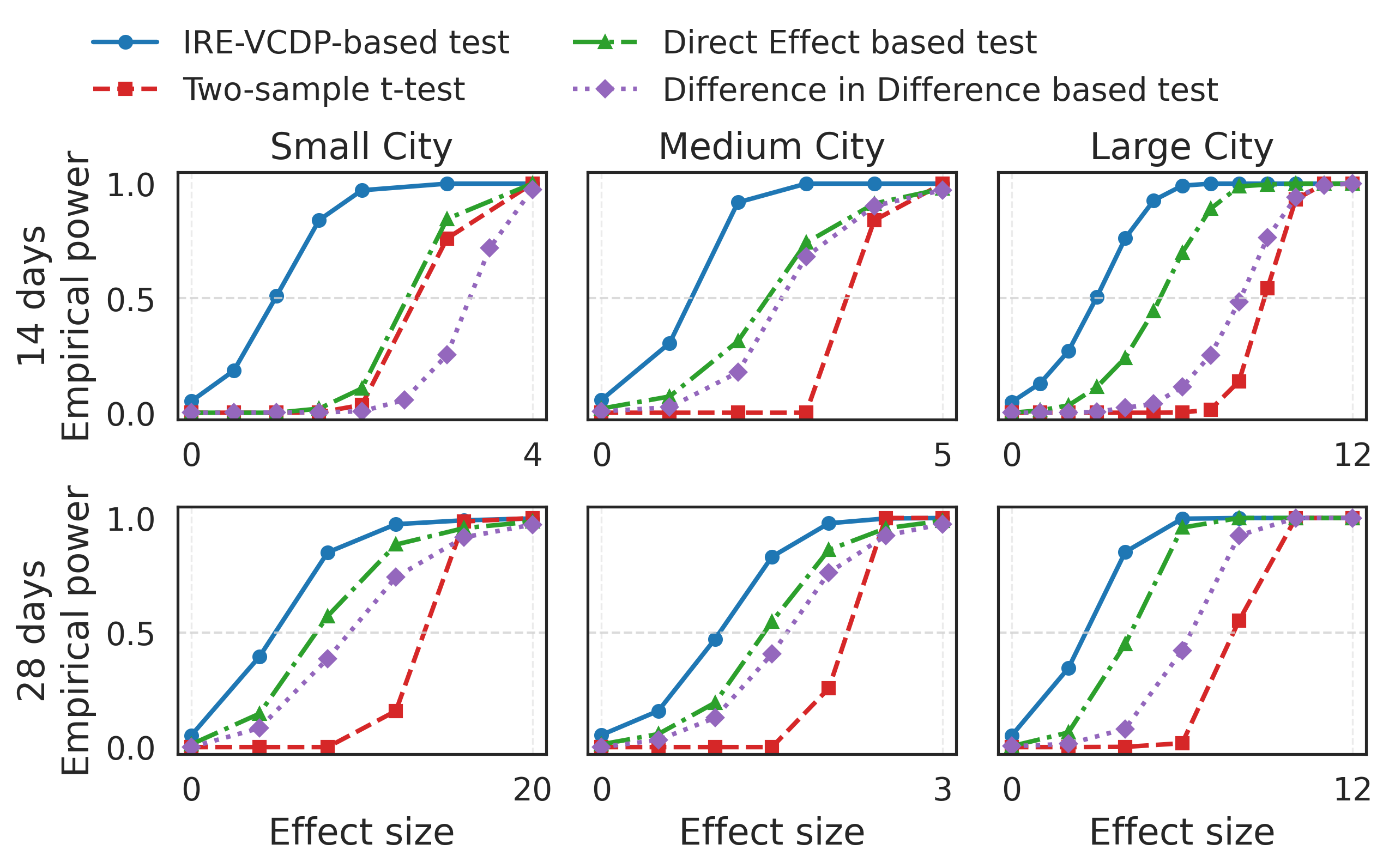}
    \caption{Left panel: empirical and bootstrap sampling distributions of the estimated GATE under null; Right panel: empirical rejection rates of the proposed IRE-VCDP-based test, the $t$-test, the DiD estimator, and the DE estimator.}
    \label{fig:city_power}
\end{figure}

\begin{example}[{\bf Real-data-based simulations across all cities}]
\label{ex:simulation_all_city}

Beyond individual cities, our framework also accommodates joint evaluation across multiple cities. In this example, we define the GATE as the overall difference in GMV across the three cities between the treatment and control groups. The findings are qualitatively similar to those obtained for individual cities: the bootstrap approximation remains accurate under the null and the proposed test maintains a clear power advantage over all three alternative methods. Detailed results are deferred to Appendix~\ref{appendix:additional_experiment}.

\end{example}

\begin{example}[{\bf Real data analysis}]
\label{ex:real_data}

We apply our method to a real-world A/B experiment conducted by Didi Chuxing from November~1 to
November~21, 2025 (21 days), in which users were randomly assigned to treatment and control groups
receiving different subsidy policies. The new policy was designed to increase GMV relative to the
incumbent one.

Before fitting the models, we examine the raw data. Figure~\ref{fig:cities_three_plots} in the
Appendix displays the temporal evolution of GMV, order requests, and drivers' total online time
across cities over the experimental period. GMV closely tracks order volume under normal conditions,
but visibly decouples on days with abnormally high demand, when requests exceed available supply
and cannot be fully converted into completed rides. This pattern motivates the inclusion of the
supply--demand imbalance term~\eqref{eq:demand_supply_gap} and confirms the suitability of
models~\eqref{eq:model_Y}--\eqref{eq:model_state_within_group_all} for this setting.

Figure~\ref{fig:city_140_res_plots} presents, for the medium city, the fitted values of GMV,
order requests, and drivers' total online time against their observed counterparts, together with
the corresponding residuals. Figures~\ref{fig:city_53_res_plots_large}
and~\ref{fig:city_310_res_plots_small} in the Appendix show analogous results for the large and small cities,
respectively. Across all cities, the proposed models achieve a close fit to both the outcome and
state variables, with residuals that fluctuate randomly and exhibit no systematic structure.


Table~\ref{tab:pvalues} in Appendix~\ref{appendix:additional_experiment} reports the $p$-values from all four methods for both A/A and A/B analyses across cities. The A/A results serve as a sanity check: all methods yield $p$-values above 5\%, confirming proper Type I error control. For the A/B analysis, the proposed test shows a meaningful drop in $p$-values from the A/A period, with the small city reaching 0.078 (significant at the 10\% level); no significant effects are detected for the medium or large cities. In contrast, the alternative methods yield similar $p$-values in both A/A and A/B periods, failing to distinguish between them, suggesting limited sensitivity to treatment effects transmitted through supply--demand dynamics.

\end{example}

\begin{figure}[htbp]
    \centering
    \includegraphics[width=0.9\columnwidth]{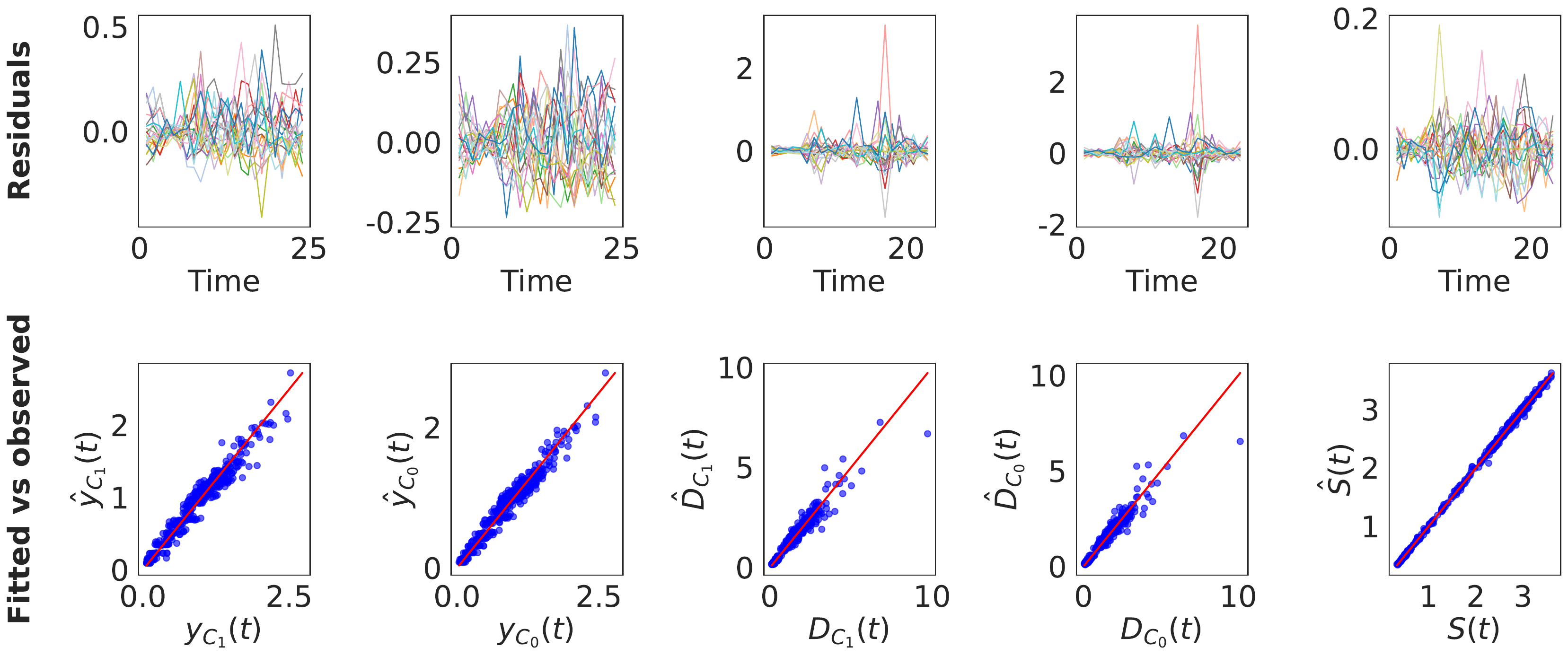}
    \caption{Medium city: residuals (top row) and fitted vs. observed values (bottom row). Columns (left to right): GMV (treatment), GMV (control), number of requests (treatment), number of requests (control), and drivers’ total online time.}
    \label{fig:city_140_res_plots}
\end{figure}

\bibliographystyle{plainnat}
\bibliography{example_paper}

@article{callaway2021difference,
  title={Difference-in-differences with multiple time periods},
  author={Callaway, Brantly and Sant’Anna, Pedro HC},
  journal={Journal of econometrics},
  volume={225},
  number={2},
  pages={200--230},
  year={2021},
  publisher={Elsevier}
}

@article{li2020rerandomization,
	title={Rerandomization and regression adjustment},
	author={Li, Xinran and Ding, Peng},
	journal={Journal of the Royal Statistical Society Series B: Statistical Methodology},
	volume={82},
	number={1},
	pages={241--268},
	year={2020},
	publisher={Oxford University Press}
}

@book{imbens2015causal,
  title={Causal inference in statistics, social, and biomedical sciences},
  author={Imbens, Guido W and Rubin, Donald B},
  year={2015},
  publisher={Cambridge university press}
}

@article{li2024evaluating,
  title={Evaluating dynamic conditional quantile treatment effects with applications in ridesharing},
  author={Li, Ting and Shi, Chengchun and Lu, Zhaohua and Li, Yi and Zhu, Hongtu},
  journal={Journal of the American Statistical Association},
  volume={119},
  number={547},
  pages={1736--1750},
  year={2024},
  publisher={Taylor \& Francis}
}

@article{hu2022switchback,
  title={Switchback experiments under geometric mixing},
  author={Hu, Yuchen and Wager, Stefan},
  journal={arXiv preprint arXiv:2209.00197},
  year={2022}
}

@article{bojinov2023design,
  title={Design and analysis of switchback experiments},
  author={Bojinov, Iavor and Simchi-Levi, David and Zhao, Jinglong},
  journal={Management Science},
  volume={69},
  number={7},
  pages={3759--3777},
  year={2023},
  publisher={INFORMS}
}

@article{wang2018quantile,
	title={Quantile-optimal treatment regimes},
	author={Wang, Lan and Zhou, Yu and Song, Rui and Sherwood, Ben},
	journal={Journal of the American Statistical Association},
	volume={113},
	number={523},
	pages={1243--1254},
	year={2018},
	publisher={Taylor \& Francis}
}

@article{rubin2005causal,
	title={Causal inference using potential outcomes: Design, modeling, decisions},
	author={Rubin, Donald B},
	journal={Journal of the American statistical Association},
	volume={100},
	number={469},
	pages={322--331},
	year={2005},
	publisher={Taylor \& Francis}
}

@inproceedings{yang2018mean,
	title={Mean field multi-agent reinforcement learning},
	author={Yang, Yaodong and Luo, Rui and Li, Minne and Zhou, Ming and Zhang, Weinan and Wang, Jun},
	booktitle={International conference on machine learning},
	pages={5571--5580},
	year={2018},
	organization={PMLR}
}

@article{li2022random,
	title={Random graph asymptotics for treatment effect estimation under network interference},
	author={Li, Shuangning and Wager, Stefan},
	journal={The Annals of Statistics},
	volume={50},
	number={4},
	pages={2334--2358},
	year={2022},
	publisher={Institute of Mathematical Statistics}
}

@article{shi2023multiagent,
	title={A multiagent reinforcement learning framework for off-policy evaluation in two-sided markets},
	author={Shi, Chengchun and Wan, Runzhe and Song, Ge and Luo, Shikai and Zhu, Hongtu and Song, Rui},
	journal={The Annals of Applied Statistics},
	volume={17},
	number={4},
	pages={2701--2722},
	year={2023},
	publisher={Institute of Mathematical Statistics}
}

@article{su2021model,
	title={Model-assisted analyses of cluster-randomized experiments},
	author={Su, Fangzhou and Ding, Peng},
	journal={Journal of the Royal Statistical Society Series B: Statistical Methodology},
	volume={83},
	number={5},
	pages={994--1015},
	year={2021},
	publisher={Oxford University Press}
}

@article{zhu2014spatially,
	title={Spatially varying coefficient model for neuroimaging data with jump discontinuities},
	author={Zhu, Hongtu and Fan, Jianqing and Kong, Linglong},
	journal={Journal of the American Statistical Association},
	volume={109},
	number={507},
	pages={1084--1098},
	year={2014},
	publisher={Taylor \& Francis}
}

@article{chernozhukov2013gaussian,
	title={Gaussian approximations and multiplier bootstrap for maxima of sums of high-dimensional random vectors},
	author={Chernozhukov, Victor and Chetverikov, Denis and Kato, Kengo},
	journal={Annals of Statistics},
	volume={41},
	number={6},
	pages={2786--2819},
	year={2013}
}

@article{luo2024policy,
  title={Policy evaluation for temporal and/or spatial dependent experiments},
  author={Luo, Shikai and Yang, Ying and Shi, Chengchun and Yao, Fang and Ye, Jieping and Zhu, Hongtu},
  journal={Journal of the Royal Statistical Society Series B: Statistical Methodology},
  volume={86},
  number={3},
  pages={623--649},
  year={2024},
  publisher={Oxford University Press UK}
}

@misc{shumway2006time,
  title={Time Series Analysis and Its Applications: With R Examples},
  author={Shumway, Rober H},
  year={2006},
  publisher={Springer}
}

@article{hastie1993varying,
  title={Varying-coefficient models},
  author={Hastie, Trevor and Tibshirani, Robert},
  journal={Journal of the Royal Statistical Society Series B: Statistical Methodology},
  volume={55},
  number={4},
  pages={757--779},
  year={1993},
  publisher={Oxford University Press}
}

@article{hoover1998nonparametric,
  title={Nonparametric smoothing estimates of time-varying coefficient models with longitudinal data},
  author={Hoover, Donald R and Rice, John A and Wu, Colin O and Yang, Li-Ping},
  journal={Biometrika},
  volume={85},
  number={4},
  pages={809--822},
  year={1998},
  publisher={Oxford University Press}
}

@article{fan1999statistical,
  title={Statistical estimation in varying coefficient models},
  author={Fan, Jianqing and Zhang, Wenyang},
  journal={The annals of Statistics},
  volume={27},
  number={5},
  pages={1491--1518},
  year={1999},
  publisher={Institute of Mathematical Statistics}
}

@article{chiang2001smoothing,
  title={Smoothing spline estimation for varying coefficient models with repeatedly measured dependent variables},
  author={Chiang, Chin-Tsang and Rice, John A and Wu, Colin O},
  journal={Journal of the American Statistical Association},
  volume={96},
  number={454},
  pages={605--619},
  year={2001},
  publisher={Taylor \& Francis}
}

@article{huang2004polynomial,
  title={Polynomial spline estimation and inference for varying coefficient models with longitudinal data},
  author={Huang, Jianhua Z and Wu, Colin O and Zhou, Lan},
  journal={Statistica Sinica},
  pages={763--788},
  year={2004},
  publisher={JSTOR}
}

@article{park2015varying,
  title={Varying coefficient regression models: a review and new developments},
  author={Park, Byeong U and Mammen, Enno and Lee, Young K and Lee, Eun Ryung},
  journal={International Statistical Review},
  volume={83},
  number={1},
  pages={36--64},
  year={2015},
  publisher={Wiley Online Library}
}

@article{sun2023semiparametric,
  title={Semiparametric additive time-varying coefficients model for longitudinal data with censored time origin},
  author={Sun, Yanqing and Shou, Qiong and Gilbert, Peter B and Heng, Fei and Qian, Xiyuan},
  journal={Biometrics},
  volume={79},
  number={2},
  pages={695--710},
  year={2023},
  publisher={Wiley Online Library}
}

@article{zhang2025bayesian,
  title={A Bayesian spatial--temporal varying coefficients model for estimating excess deaths associated with respiratory infections},
  author={Zhang, Yuzi and Chang, Howard H and Iuliano, Angela D and Reed, Carrie},
  journal={Journal of the Royal Statistical Society Series A: Statistics in Society},
  volume={188},
  number={3},
  pages={843--858},
  year={2025},
  publisher={Oxford University Press UK}
}

@inproceedings{sadovnichiy2022mathematical,
  title={Mathematical modeling of overcoming the COVID-19 pandemic and restoring economic growth},
  author={Sadovnichiy, VA and Akaev, Askar Akayevich and Zvyagintsev, AI and Sarygulov, Askar Islamovich},
  booktitle={Doklady Mathematics},
  volume={106},
  number={1},
  pages={230--235},
  year={2022},
  organization={Springer}
}

@article{rubin1980randomization,
  title={Randomization analysis of experimental data: The Fisher randomization test comment},
  author={Rubin, Donald B},
  journal={Journal of the American statistical association},
  volume={75},
  number={371},
  pages={591--593},
  year={1980},
  publisher={JSTOR}
}

@article{halloran2016dependent,
  title={Dependent happenings: a recent methodological review},
  author={Halloran, M Elizabeth and Hudgens, Michael G},
  journal={Current epidemiology reports},
  volume={3},
  number={4},
  pages={297--305},
  year={2016},
  publisher={Springer}
}

@article{savje2021average,
  title={Average treatment effects in the presence of unknown interference},
  author={S{\"a}vje, Fredrik and Aronow, Peter and Hudgens, Michael},
  journal={Annals of statistics},
  volume={49},
  number={2},
  pages={673},
  year={2021}
}

@inproceedings{farias2023correcting,
  title={Correcting for interference in experiments: A case study at douyin},
  author={Farias, Vivek and Li, Hao and Peng, Tianyi and Ren, Xinyuyang and Zhang, Huawei and Zheng, Andrew},
  booktitle={Proceedings of the 17th ACM Conference on Recommender Systems},
  pages={455--466},
  year={2023}
}

@inproceedings{liu2023unite,
  title={UniTE: A Unified Treatment Effect Estimation Method for One-sided and Two-sided Marketing},
  author={Liu, Runshi and Hou, Zhipeng},
  booktitle={Proceedings of the 32nd ACM International Conference on Information and Knowledge Management},
  pages={1472--1481},
  year={2023}
}

@article{johari2022experimental,
  title={Experimental design in two-sided platforms: An analysis of bias},
  author={Johari, Ramesh and Li, Hannah and Liskovich, Inessa and Weintraub, Gabriel Y},
  journal={Management Science},
  volume={68},
  number={10},
  pages={7069--7089},
  year={2022},
  publisher={INFORMS}
}

@article{li2025testing,
  title={Testing stationarity and change point detection in reinforcement learning},
  author={Li, Mengbing and Shi, Chengchun and Wu, Zhenke and Fryzlewicz, Piotr},
  journal={The Annals of Statistics},
  volume={53},
  number={3},
  pages={1230--1256},
  year={2025},
  publisher={Institute of Mathematical Statistics}
}

@inproceedings{wang2023robust,
  title={A robust test for the stationarity assumption in sequential decision making},
  author={Wang, Jitao and Shi, Chengchun and Wu, Zhenke},
  booktitle={International Conference on Machine Learning},
  pages={36355--36379},
  year={2023},
  organization={PMLR}
}

@article{li2023optimal,
  title={Optimal treatment allocation for efficient policy evaluation in sequential decision making},
  author={Li, Ting and Shi, Chengchun and Wang, Jianing and Zhou, Fan and others},
  journal={Advances in Neural Information Processing Systems},
  volume={36},
  pages={48890--48905},
  year={2023}
}

@article{basse2018limitations,
  title={Limitations of design-based causal inference and a/b testing under arbitrary and network interference},
  author={Basse, Guillaume W and Airoldi, Edoardo M},
  journal={Sociological Methodology},
  volume={48},
  number={1},
  pages={136--151},
  year={2018},
  publisher={SAGE Publications Sage CA: Los Angeles, CA}
}

@article{liu2024cluster,
  title={Cluster-adaptive network a/b testing: From randomization to estimation},
  author={Liu, Yang and Zhou, Yifan and Li, Ping and Hu, Feifang},
  journal={Journal of Machine Learning Research},
  volume={25},
  number={170},
  pages={1--48},
  year={2024}
}

@article{masoero2026multiple,
  title={Multiple randomization designs: Estimation and inference with interference},
  author={Masoero, Lorenzo and Vijaykumar, Suhas and Richardson, Thomas S and McQueen, James and Rosen, Ido and Burdick, Brian and Bajari, Pat and Imbens, Guido},
  journal={Journal of the Royal Statistical Society Series B: Statistical Methodology},
  pages={qkaf073},
  year={2026},
  publisher={Oxford University Press UK}
}

@article{chen2024real,
  title={Real-time spatial-intertemporal dynamic pricing for balancing supply and demand in a ride-hailing network: near-optimal policies and the value of dynamic pricing},
  author={Chen, Q and Lei, Y and Jasin, Stefanus},
  journal={Operations Research},
  volume={72},
  number={5},
  pages={2097--2118},
  year={2024},
  publisher={INFORMS (Institute for Operations Research and Management Sciences)}
}

@article{ashenfelter1978estimating,
  title={Estimating the effect of training programs on earnings},
  author={Ashenfelter, Orley},
  journal={The Review of Economics and Statistics},
  pages={47--57},
  year={1978},
  publisher={JSTOR}
}

\newpage
\appendix

\section*{}
\vskip -0.2in
\hrule height 1.5pt
\vskip 0.4cm
\begin{center}
    \textbf{\LARGE Appendix}
\end{center}
\vskip 0.15cm
\hrule height 1.5pt
\vskip 1cm

\section{The proposed algorithm}
\label{sec:algorithm}

\begin{algorithm}[htbp]
\caption{Estimation and Inference for GATE}
\label{alg:estimation and testing}
\begin{algorithmic}[1]

\Require Observed aggregate data
$\{Y_{d,C}(t),\, S_{d,C}(t),\, X_{d,C}(t),\, \widetilde{X}_{d,C}(t),\, f(t)\}$
for $d = 1,\dots,n$, $t = 1,\dots,m$, $C \in \{C_0, C_1\}$;
number of bootstrap samples $B$; significance level $\alpha$.

\State
Estimate the time-varying coefficients in the outcome model~\eqref{eq:model_Y} and the state
model~\eqref{eq:model_state_within_group_all} via two-step kernel smoothing, yielding
$\widetilde{\theta}_C(t)$ and $\widetilde{\Theta}_C$ as in~\eqref{eq:coe_est}.
Substitute these into~\eqref{eq:GATE_coe_expression} to obtain $\widehat{\text{GATE}}$.

\State
For each $d = 1, \dots, n$ and $C \in \{C_0, C_1\}$, compute the fitted values
$\widetilde{S}_{d,C}(t+1)$ and $\widetilde{Y}_{d,C}(t)$, along with the residuals:
\begin{align*}
    \widehat{\epsilon}_{d,C}(t)
        &= Y_{d,C}(t) - Z_{d,C}(t)^\top \widetilde{\theta}_C(t),
        \quad t = 1,\dots,m, \\
    \widehat{E}_{d,C}(t+1)
        &= S_{d,C}(t+1) - \widetilde{\Theta}_C\,\widetilde{Z}_{d,C}(t),
        \quad t = 1,\dots,m-1.
\end{align*}

\For{$b = 1$ \textbf{to} $B$}

    \State
    Draw i.i.d.\ standard normal multipliers
    $\bigl\{\xi_{d,Y,C_0}^b,\, \xi_{d,Y,C_1}^b,\,
             \xi_{d,S,C_0}^b,\, \xi_{d,S,C_1}^b\bigr\}_{d=1}^n$.

    \State
    Construct bootstrap pseudo-responses by perturbing the fitted values with the
    multiplier-weighted residuals:
    \begin{align*}
        \widehat{S}_{d,C}^{\,b}(t+1)
            &= \widetilde{S}_{d,C}(t+1)
               + \xi_{d,S,C}^b\,\widehat{E}_{d,C}(t+1), \\
        \widehat{Y}_{d,C}^{\,b}(t)
            &= \widetilde{Y}_{d,C}(t)
               + \xi_{d,Y,C}^b\,\widehat{\epsilon}_{d,C}(t).
    \end{align*}

    \State
    Re-estimate the model using the pseudo-data
    $\{\widehat{S}_{d,C}^{\,b}(t),\, \widehat{Y}_{d,C}^{\,b}(t)\}$
    to obtain $\widetilde{\theta}_C^{\,b}(t)$ and $\widetilde{\Theta}_C^{\,b}$.

    \State
    Compute the centered bootstrap statistic:
    \[
        T^b = \widehat{\text{GATE}}^{\,b} - \widehat{\text{GATE}}.
    \]

\EndFor

\State
Reject $H_0$ at significance level $\alpha$ if
\[
    T \;=\; \widehat{\text{GATE}}
    \;>\;
    \text{the } (1-\alpha)\text{-quantile of }
    \bigl\{T^b\bigr\}_{b=1}^B.
\]

\Ensure
Estimate $\widehat{\text{GATE}}$ and the corresponding decision rule for testing $H_0$.

\end{algorithmic}
\end{algorithm}

\section{Additional Experiment Results}
\label{appendix:additional_experiment}
In this section, we describe the data-generating process for the real-data--based simulation and report additional experimental results.

\textbf{Example \ref{ex:simulation_single_city} (continued).} 
The A/A experiment spans two weeks, 
with each day divided into $m=24$ time intervals.
For the GMV outcome model, in addition to the supply and demand state variables, the predictive covariates include the other subsidy amount and the supply--demand gap $G_d(t)$. The state-transition model uses temperature, precipitation, and a holiday indicator (equal to 1 if the time interval falls on a public holiday, and 0 otherwise) as predictors. We first fit models \eqref{eq:model_Y}--\eqref{eq:model_state_within_group_all} using the full A/A dataset to obtain coefficient estimates $\{ \widetilde{\alpha}_0,\widetilde{\alpha}_1, \widetilde{\alpha}_2 \}$ and $\{ \widetilde{\gamma}_0(t), \widetilde{\Phi}_0(t), \widetilde{\Phi}_1(t) \}$, along with estimated error processes $\widetilde{e}_d(t)$ and $\widetilde{E}_d(t)$.

To generate simulated data, we adopt a bootstrap procedure. In each run, we sample $n \in \{14, 28\}$ initial state observations and $n$ estimated error trajectories with replacement. For each group $C \in \{C_0, C_1\}$, we sample the initial demand $\widetilde D_{d,C}(1)$ within the group and the initial supply $\widetilde S_d(1)$ from platform-level supply. For each city and time interval $t$, predictive covariates are generated as follows. In the outcome model, the other subsidy amount and the supply--demand gap $G_d(t)$ are independently drawn from uniform distributions over their empirical ranges at time $t$ during the A/A period; the subsidy is sampled separately for treatment and control groups, while $G_d(t)$ is shared. In the state-transition model, temperature and precipitation are sampled analogously and shared across groups, while the holiday indicator follows the calendar. We then generate $n$ days of data under the VCDP model:
\begin{eqnarray*}
\widetilde Y_{d, C} (t) &=& 
\alpha_{0,C} ( t) + \alpha_{1,C}^\top (t) X_{d, C}(t)
+ \alpha_{2, C}^\top (t) \widetilde S_{d,C}(t) + \widetilde\epsilon_{d,C}(t), \\
\widetilde S_{d, C}(t+1) &=&  \gamma_{0, C}(t) + \Phi_{0, C}(t) \widetilde{X}_{d, C}(t)
+ \Phi_{1, C}(t) \widetilde S_{d, C} (t) + \widetilde E_{d, C}(t+1).
\end{eqnarray*}

We set $\{ \alpha_{0,C}, \alpha_{1,C}, \alpha_{2,C} \} = \{ \widetilde{\alpha}_0, \widetilde{\alpha}_1, \widetilde{\alpha}_2 \}$ for both groups $C \in \{C_0, C_1\}$. The parameters $\gamma_{0,C}(t)$ and $\Phi_{1,C}(t)$ are shared across groups and set equal to $\{ \widetilde{\gamma}_0(t), \widetilde{\Phi}_1(t) \}$. To control the magnitude of the GATE, we let $\Phi_{0,C_0}(t) = \widetilde{\Phi}_0(t)$ and $\Phi_{0,C_1}(t) = \eta \widetilde{\Phi}_0(t)$ for a constant $\eta \ge 0$. When $\eta = 0$, there is no treatment effect; when $\eta > 0$, the new policy is beneficial. 
For each city and each value of $\eta$, we generate 1000 independent datasets and apply the proposed test, the two-sample $t$-test, the DiD estimator, and the DE estimator at the 5\% significance level.

Figure \ref{fig:city_pvalues} shows that the empirical distributions of bootstrap $p$-values under the null across the three cities closely approximate the uniform distribution, confirming the validity of the proposed testing procedure.

\begin{figure}[htbp]
\vskip -0.15in
    \centering
    \includegraphics[width=0.9\columnwidth]{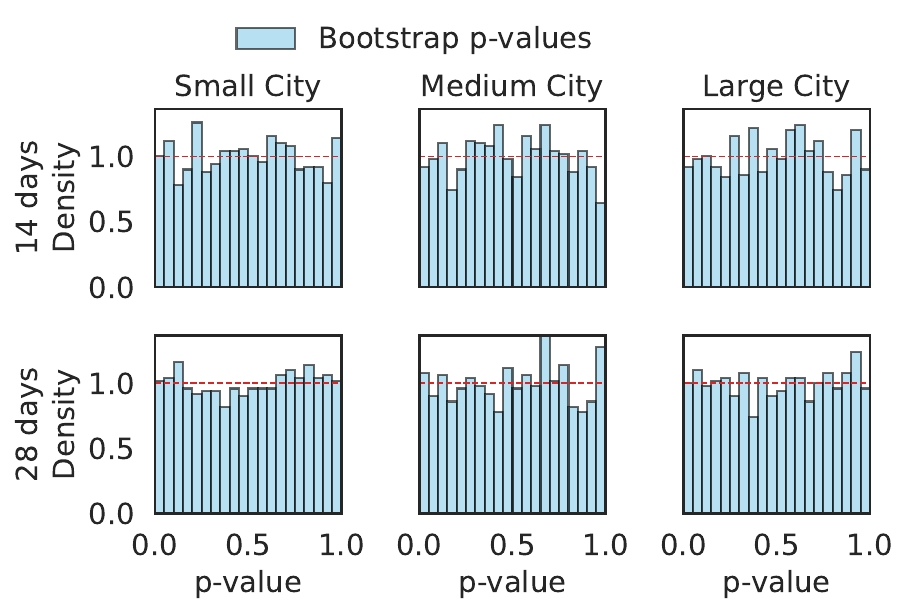}
    \caption{Empirical distributions of bootstrap $p$-values under the null for three cities.}
    \label{fig:city_pvalues}
\vskip -0.15in
\end{figure}

{\bf Example \ref{ex:simulation_all_city} (Continued).}
This example mimics a setting where the experiment is conducted simultaneously across $R$ cities. In each city, treatment and control groups receive different subsidy amounts under a common budget constraint. The objective is to evaluate whether the new subsidy allocation policy improves the overall GMV across the $R$ cities. Participants within each city are randomly selected to participate in the experiment.
Let $\{N_r,\, r=1,\dots,R\}$ denote the number of participants in the $r$-th city, and let $N=\sum_{r=1}^R N_r$. The GATE is defined as
\begin{equation}
\label{eq:GATE_def_individual_city}
\begin{aligned}
\text{GATE}
= N^{-1} \sum_{r=1}^R 
\Big[
  \mathbb{E} \Big( \sum_{t=1}^m \sum_{i=1}^{N_r} Y^i(t, \bm{1}_{N,t}) \Big)
  - \mathbb{E} \Big( \sum_{t=1}^m \sum_{i=1}^{N_r} Y^i(t, \bm{0}_{N,t}) \Big)
\Big].
\end{aligned}
\end{equation}

The data-generating process is the same as in Example~\ref{ex:simulation_single_city}: data are generated independently within each city following the same procedure. However, the GATE is evaluated jointly across all cities. Specifically, we first estimate the GATE within each city, and then aggregate these estimates to obtain the overall GATE. The bootstrap-based test is constructed accordingly.

As shown in Figure~\ref{fig:joint_result}, the distribution of the bootstrap statistic closely approximates the empirical sampling distribution of the estimated GATE. Under the null hypothesis, the empirical distribution of bootstrap $p$-values remains approximately uniform on $[0,1]$, indicating good calibration. 
Moreover, the proposed test demonstrates a clear power advantage over all three alternative methods, particularly for small effect sizes.

\begin{figure}[htbp]
    \centering
    \includegraphics[width=0.9\columnwidth]{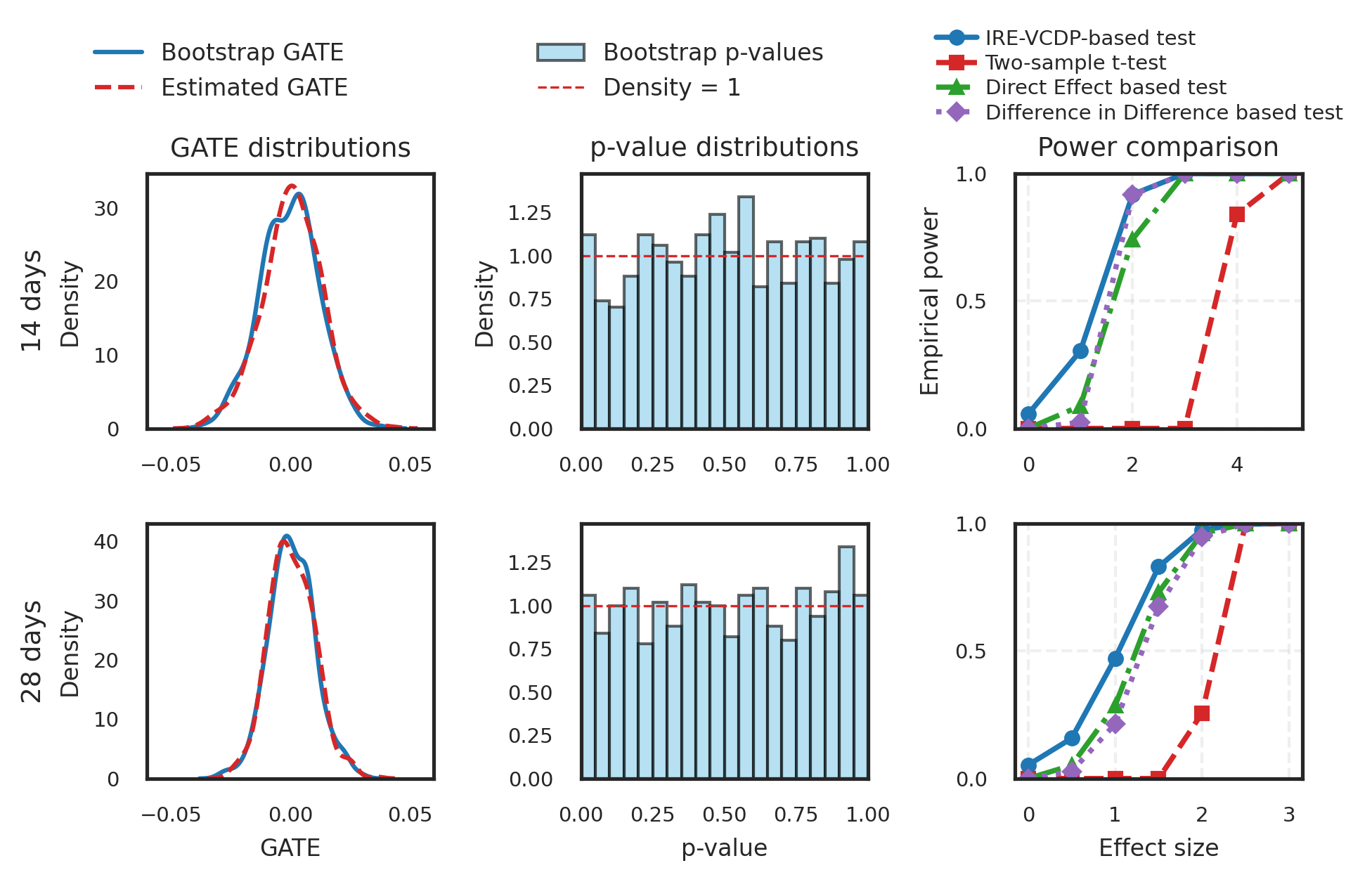}
    \caption{Joint simulation results pooling all three cities. Columns report GATE sampling distributions, bootstrap $p$-value distributions under the null, and power curves of the proposed test versus the $t$-test, the DiD estimator, and the DE estimator as a function of the true GATE; rows correspond to the two observation horizons ($n=14$ and $n=28$ days).} 
    \label{fig:joint_result}
\end{figure}

{\bf Example \ref{ex:real_data} (Continued).} Figure~\ref{fig:cities_three_plots} illustrates the temporal evolution of GMV and key state variables capturing supply--demand dynamics (number of requests and drivers’ total online time) across cities over the experimental period. Figures~\ref{fig:city_53_res_plots_large}--\ref{fig:city_310_res_plots_small} present the fitted values of GMV, number of requests, and drivers’ total online time against their observed counterparts, together with the corresponding residuals over time for both the small and large cities. These results demonstrate that the proposed VCDP model provides a good fit to both the outcome and state variables.

\begin{figure}[htbp]
    \centering
    \includegraphics[width=0.9\columnwidth]{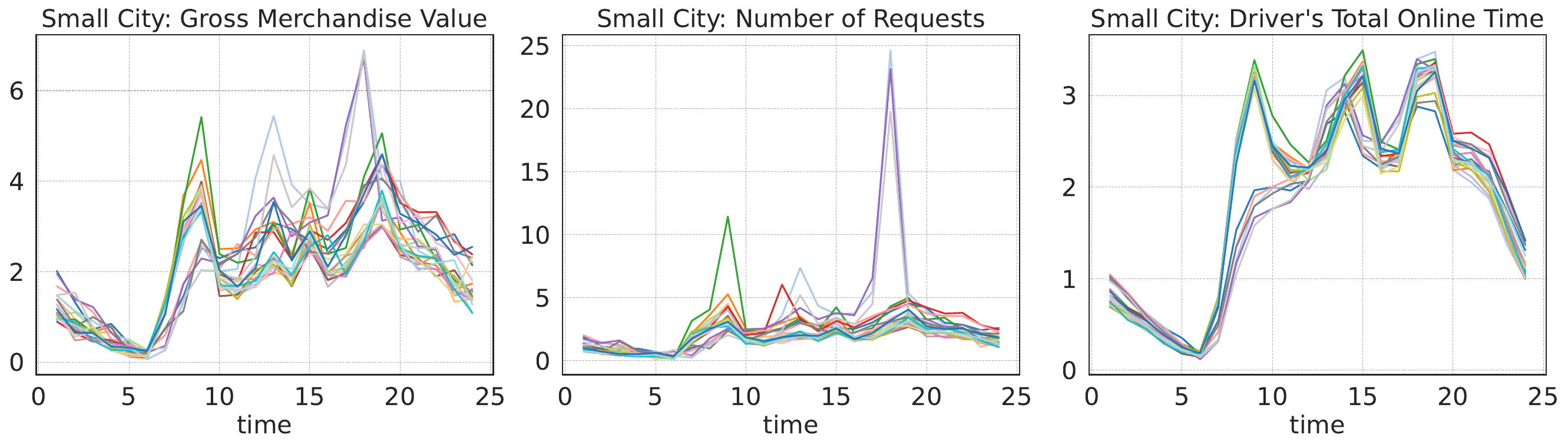} 
        \includegraphics[width=0.9\columnwidth]{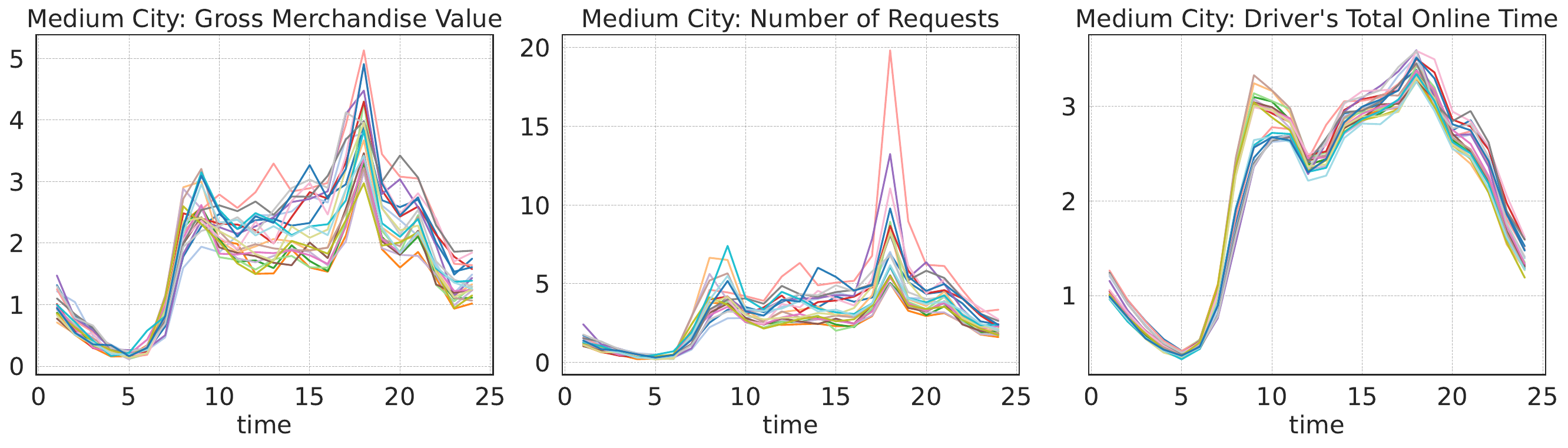}
    \includegraphics[width=0.9\columnwidth]{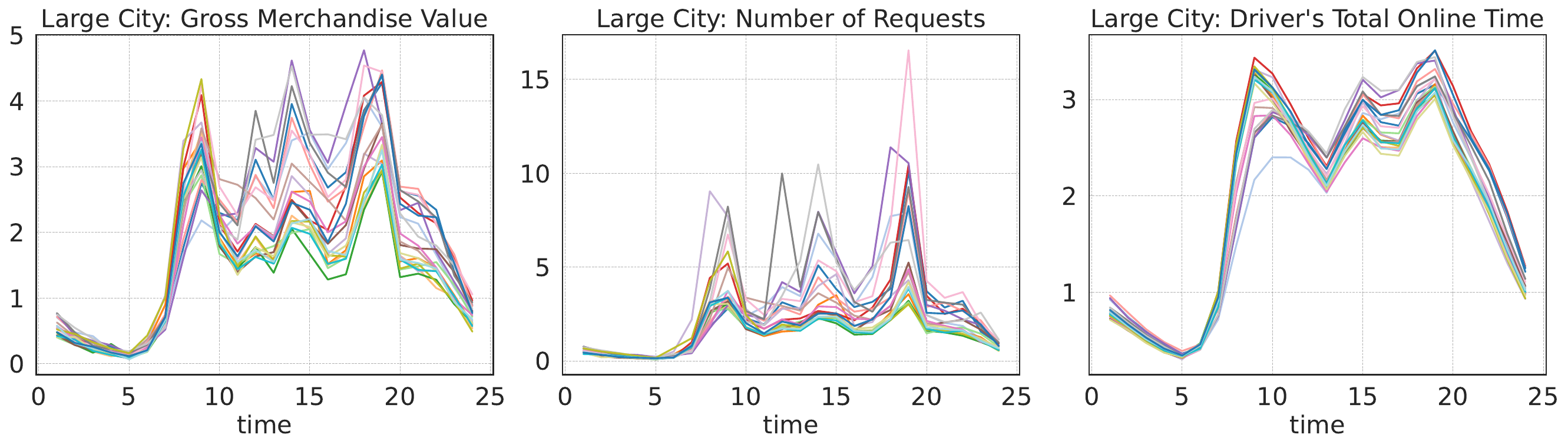}
    \caption{Scaled business metrics from the small, medium and large cities across 21 days.}
    \label{fig:cities_three_plots}
\end{figure}

\begin{figure}[htbp]
    \centering
    \includegraphics[width=0.9\columnwidth]{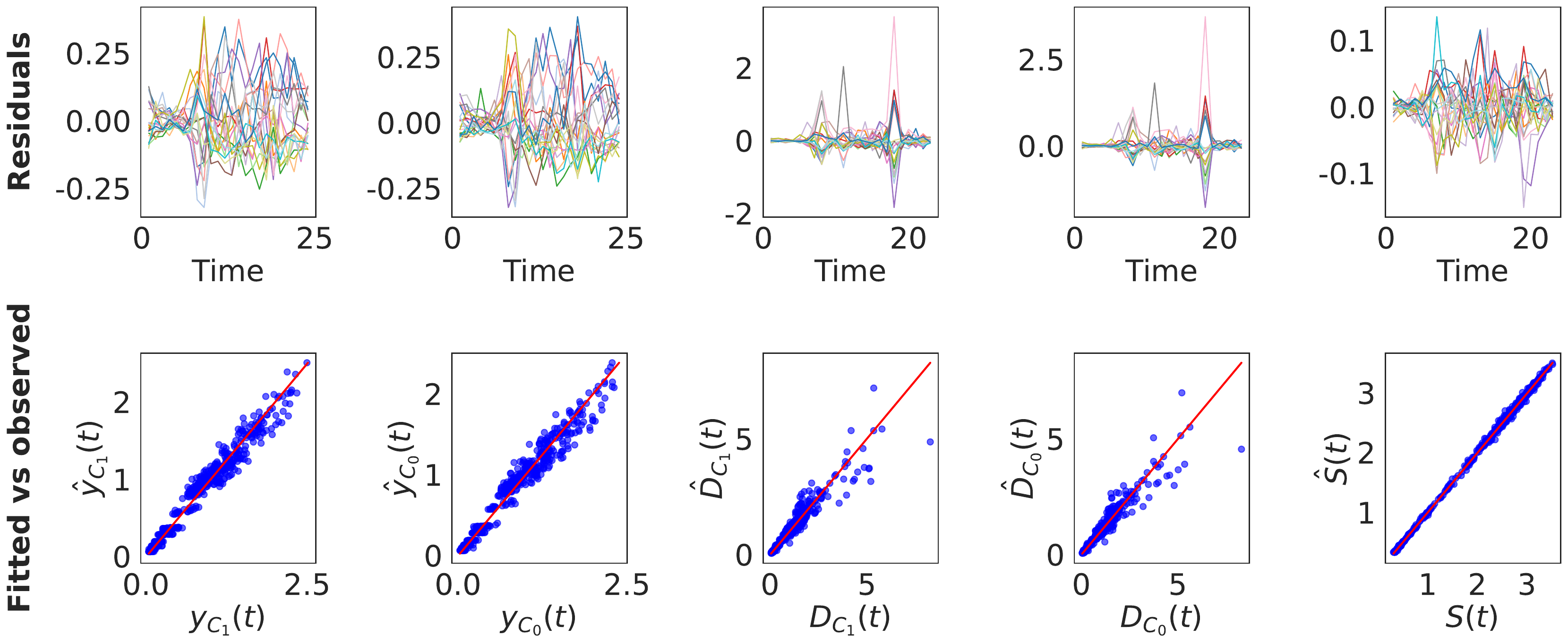}
    \caption{Large city: residuals (top row) and fitted vs. observed values (bottom row). Columns (left to right): GMV (treatment), GMV (control), number of requests (treatment), number of requests (control), and drivers’ total online time.}
     \label{fig:city_53_res_plots_large}
\end{figure}

\begin{figure}[htbp]
    \centering
    \includegraphics[width=0.9\columnwidth]{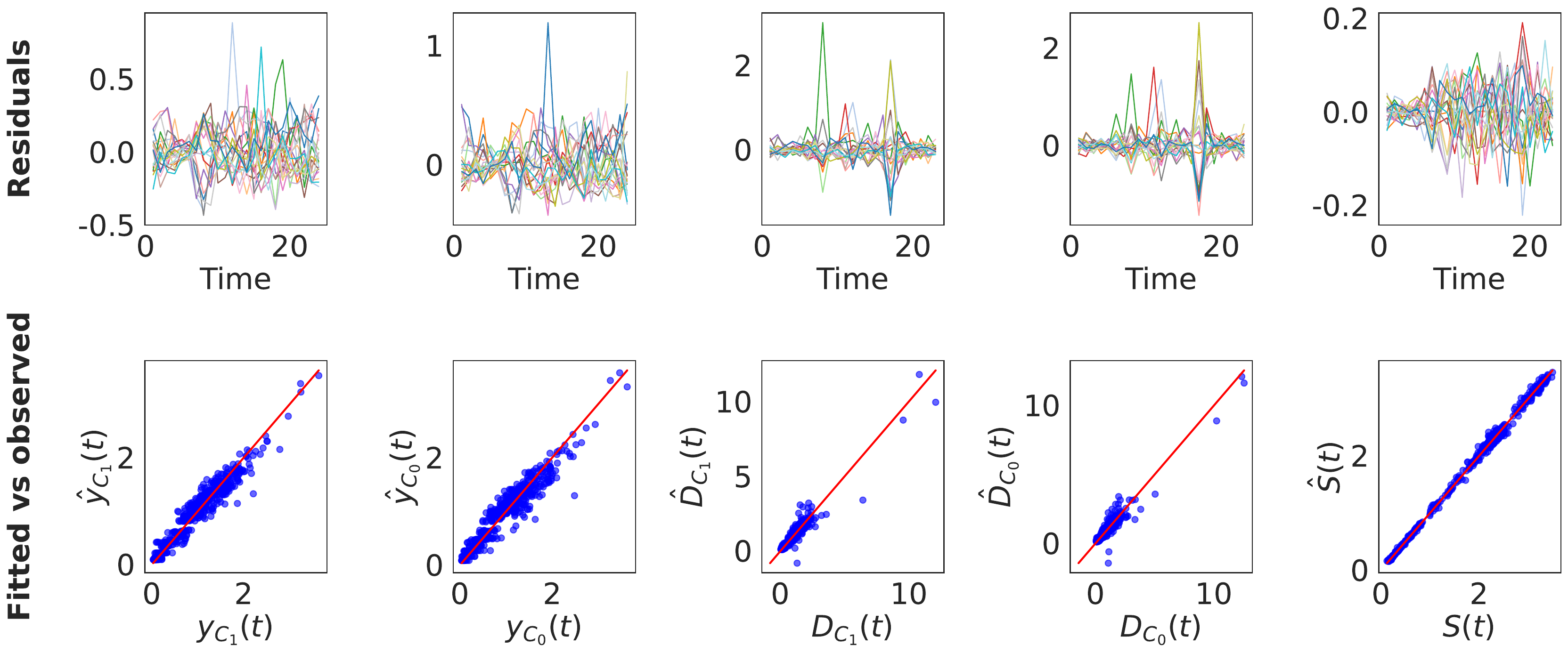}
    \caption{Small city: residuals (top row) and fitted vs. observed values (bottom row). Columns (left to right): GMV (treatment), GMV (control), number of requests (treatment), number of requests (control), and drivers’ total online time.}
     \label{fig:city_310_res_plots_small}
\end{figure}

Table~\ref{tab:pvalues} reports the $p$-values from all four methods for A/A and A/B analyses across cities. Detailed discussion is provided in Section~\ref{sec:experiments}.

\begin{table}[htbp]
 \caption{P-value results for different methods and cities under A/A and A/B periods.}
  \label{tab:pvalues}
  \centering
  \begin{tabular}{lcccccccc}
      \toprule
      Method & \multicolumn{2}{c}{Small City} & \multicolumn{2}{c}{Medium City} & \multicolumn{2}{c}{Large City} &
\multicolumn{2}{c}{All Cities} \\

      \cmidrule(lr){2-3} \cmidrule(lr){4-5} \cmidrule(lr){6-7} \cmidrule(lr){8-9}

       & AA & AB & AA & AB & AA & AB & AA & AB \\
      \midrule

      IRE-VCDP & 0.382 & 0.078 & 0.544 & 0.244 & 0.488 & 0.208 & 0.446 & 0.12 \\
      Two-sample t-test & 0.485 & 0.476 & 0.397 & 0.525 & 0.405 & 0.415 & 0.391 & 0.553 \\
      DE & 0.5 & 0.472 & 0.248 & 0.54 & 0.252 & 0.306 & 0.158 & 0.534 \\
      DiD & 0.515 & 0.362 & 0.099 & 0.631 & 0.532 & 0.253 & 0.366 & 0.429 \\
      \bottomrule
  \end{tabular}

\end{table}

\section{Notation and Definitions}

\subsection{Formulation and Discussion of Straightforward Treatment Effect}\label{app:Formulation of Straightforward Treatment Effect}
\begin{align*}
\tau
=&
\sum_{t=1}^m 
(  \alpha_{0,C_1} (t) -  \alpha_{0,C_0}(t) )
+ \sum_{t=1}^m \Big[ \mathbb{E}( X_{C_1} (t)  )  \alpha_{1,C_1} (t)  -
\mathbb{E}( X_{C_0} (t)  )   \alpha_{1,C_0} (t)  \Big]  \\
&  +  \sum_{t=1}^m \Big[\mathbb{E}( S_{C_1}(t)  ) 
\alpha_{2,C_1} (t) - \mathbb{E}( S_{C_0}(t)  )\alpha_{2,C_0} (t)
\Big].
\end{align*}

If the means of the predictive covariates are equal across groups, such that $ \mathbb{E}( N^{-1} \sum_{i=1}^N X^i(t)  )= \mathbb{E}( X_{C_1} (t)  )  = \mathbb{E}( X_{C_0} (t)  ) $, and $\mathbb{E}( S(t, \bm{1}_{N,t})  )= \mathbb{E}( S(t, \bm{0}_{N,t})  ) $, then $\tau$ coincides with the GATE. In practice, ensuring that the mean of the predictive covariates is balanced between the two groups is relatively straightforward using covariate balancing techniques such as matching or weighting. However, achieving equality in the expected potential outcomes of the state variables is more challenging, as the treatment history often directly influences their evolution over time.

\subsection{Formulation of GATE Estimator}\label{app:Formulation of GATE Estimator}

\begin{align*}
	&\widehat{\text{GATE}} =
\sum_{t=1}^m 
(  \widetilde{\alpha}_{0,C_1} (t) -  \widetilde{\alpha}_{0,C_0}(t) )
+
\sum_{t=1}^m \Big[ n^{-1} \sum_{d=1}^n X_d(t) ( \widetilde{\alpha}_{1,C_1} (t)  -  \widetilde{\alpha}_{1,C_0} (t) )  \Big] \\ 
&+\sum_{t=1}^m \Big[
\widetilde{\alpha}_{2,C_1} (t)^\top   \Big( \sum_{k=1}^{t-1} \{ \prod_{l=k+1}^{t-1} \widetilde{\Phi}_{1,C_1}(l) \Big[\widetilde{\gamma}_{0, C_1}(k)+\widetilde{\gamma}_{1, C_1}(k)+n^{-1}\widetilde{\Phi}_{0, C_1} (t)\sum_{d=1}^nH_{d,C_1}(t) \Big]  \}
+ \prod_{l=1}^{t-1} \widetilde{\Phi}_{1,C_1}(l) \mathbb{E}(S(1)) 
\Big)
\\
&-
\widetilde{\alpha}_{2,C_0} (t)^\top \Big(
\sum_{k=1}^{t-1} \{ \prod_{l=k+1}^{t-1} \widetilde{\Phi}_{1, C_0}(l) \Big[\widetilde\gamma_{0, C_0}(k)+n^{-1}\widetilde{\Phi}_{0, C_0} (t)\sum_{d=1}^nH_{d,C_0}(t) \Big] \}
+ \prod_{l=1}^{t-1} \widetilde{\Phi}_{1, C_0}(l) \mathbb{E}(S(1))\Big)
\Big], 
\end{align*}
where $X_d(t):=N^{-1}(N_0 X_{d,C_0}(t)+ N_1 X_{d,C_1}(t))$.

\section{Assumptions for Theorem \ref{thm:bootstrap_consistency_QTE}}
In this section,  we provide the regularity conditions for the bootstrap consistency theory.

\begin{assumption}
	\label{assump:kernel}
	The kernel function $K(\cdot)$ is a symmetric probability density function defined over the interval $ [-1,1] $. It is Lipschitz continuous and satisfies the condition $\int_{-1}^1 | t K^\prime(t )| dt < \infty$, signifying its finiteness under the integral of its absolute derivative. 
\end{assumption}

\begin{assumption}
	\label{assump:Z}
    The covariates used in outcome model and state model are independent and identically distributed across $d$ and are sub-Gaussian (uniformly over $t$). Furthermore, for any integer $t$ within the range $1\leq t\leq m$, the smallest eigenvalue of each population second-moment matrix of the covariates is bounded away from zero.
\end{assumption}

\begin{assumption}
	\label{assump:coe}
	All components of $\theta_{C_0} (ms)$, $\theta_{C_1} (ms)$, $\Theta_{C_0}(ms)$ and $\Theta_{C_1}(ms)$ possess second-order derivatives with respect to $s$. 
\end{assumption}	

\begin{assumption}
	\label{asmp:st1}
	There exist constants $0<c_1<1$, $M_{\Gamma},M_{\beta}>0$, and $M_{min}>0$ such that   $\Vert\Phi_{1,C}(t)\Vert_\infty\leq c_1$, $\Vert\alpha_{2,C}(t)\Vert_\infty\leq M_{\alpha}$, and $\Vert\Phi_{0,C}(t)\Vert_\infty\leq M_\Phi$, for $\forall C\in\{C_0.C_1\}$.
\end{assumption}

Assumption \ref{assump:kernel} is mild as the kernel $K(\cdot)$ is user-specified. Assumption \ref{assump:Z} has been commonly used
in the literature on varying coefficient models (see e.g., \citet{zhu2014spatially}). Assumption \ref{assump:coe} serves as a routine smoothness condition that guarantees the local approximation accuracy of the kernel estimator. Assumption \ref{asmp:st1} ensures that
the time series is stationary, since $\Phi(t)$ is the autoregressive coefficient. It is commonly imposed in the literature on time series analysis \citep{shumway2006time}.

\section{Proofs of Propositions and Theorems}

\subsection{Proof of Proposition \ref{prop: identification}}\label{app:proof of identification}

\begin{proof}
The proof proceeds in two steps. First, we show that the function $R(\cdot)$ can be identified from the observed data using the conditional expectation of the observed outcomes. Second, we apply the law of iterated expectations to show that the GATE is identifiable.

\textbf{Step 1: Identification of $R(\cdot)$}


By the law of iterated expectations, we have the following relationship:
\begin{eqnarray}
\label{eq: identification iterated expectation}
     &&\mathbb{E} \Big( Y^i(t) \,\Big|\, A_t^i = a, \{ S(j), f(j) \}_{j \leq t} \Big)   \\
    &=&  \mathbb{E}\Big[\mathbb{E}\Big[Y^i(t) \,\Big|\, A_t^i = a, \{ S(j),f(j) \}_{j \leq t}, \{  Y(j) \}_{j < t} \Big]\Big|
     A_t^i = a, \{ S(j),f(j) \}_{j \leq t}\Big].\nonumber
\end{eqnarray}

By taking out the inner expectation in Equation \eqref{eq: identification iterated expectation}, we can further derive:
\begin{align}
    & \mathbb{E}\Big[Y^i(t) \,\Big|\, A_t^i = a, \{ S(j),f(j) \}_{j \leq t}, \{  Y(j) \}_{j < t}\Big] \nonumber\\
    \stackrel{(i)}{=} & \mathbb{E}\Big[Y^i(t) \,\Big|\, A_t^i = a,\bar{\bm{A}}_{N,t-1}=\bar{\bm{a}}_{N,t-1}, \{ S(j),f(j) \}_{j \leq t}, \{  Y(j) \}_{j < t} \Big]\nonumber\\
    \stackrel{(ii)}{=} & \mathbb{E}\Big[Y^i(t, \bar{\bm{A}}_{N,t}) \,\Big|\, A_t^i = a,\bar{\bm{A}}_{N,t-1}=\bar{\bm{a}}_{N,t-1}, \{ S(j, \bar{\bm{A}}_{N,j-1}),f(j) \}_{j \leq t},\{  Y^i(j, \bar{\bm{A}}_{N,j}) \}_{j < t} \Big]\nonumber\\
    \stackrel{(iii)}{=} & \mathbb{E}\Big[Y^i(t, \bar{\bm{a}}_{N,t}) \,\Big|\, A_t^i = a,\bar{\bm{A}}_{N,t-1}=\bar{\bm{a}}_{N,t-1}, \{ S(j, \bar{\bm{A}}_{N,j-1}),f(j) \}_{j \leq t},\{  Y^i(j, \bar{\bm{A}}_{N,j}) \}_{j < t} \Big]\nonumber\\
    \stackrel{(iv)}{=} & \mathbb{E}\Big[Y^i(t, \bar{\bm{a}}_{N,t}) \,\Big|\, A_t^i = a, \{ S(j, \bar{\bm{a}}_{N,j-1}),f(j) \}_{j \leq t},\{  Y^i(j, \bar{\bm{a}}_{N,j}) \}_{j < t} \Big] \nonumber \\
    \stackrel{(v)}{=} & R\Big(a_t^i,\, \{ S(t), f(j) \}_{j \leq t} \Big) 
    \label{eq: identification conditional expectation}.
\end{align}
Here, equality (i) follows from Assumption \ref{assump: ATRA}, Assumption \ref{assump: SRA} and \ref{assump: PA}; equalities (ii)-(iv) follow from the consistency Assumption \ref{assump: CA}. 

According to the definition of $R(\cdot)$ and the independence of $\{  Y^i(j, \bar{\bm{a}}_{N,j}) \}_{j < t}$, substituting the result of Equation \eqref{eq: identification conditional expectation} into Equation \eqref{eq: identification iterated expectation}, and using the result of Assumption \ref{assump: SRA}, we obtain immediately that:
\begin{equation*}
    R\Big(a_t^i,\, \{ S(t), f(j) \}_{j \leq t} \Big)
    =\mathbb{E} \Big( Y^i(t) \,\Big|\, A_t^i=a, \{ S(j), f(j) \}_{j \leq t} \Big).
\end{equation*}


Consequently, $R(\cdot)$ can be estimated by the sample average over units receiving treatment $a$:
{\small
\begin{equation*}
    R\Big(a_t^i,\, \{ S(j), f(j) \}_{j \leq  t} \Big)
    =N_a^{-1}\mathbb{E} \Big(\sum_{\{i:a_t^i=a\}} Y^i(t) \,\Big|\, A_t^i=a_t^i, \{ S(j), f(j) \}_{j < t} \Big),
\end{equation*}
}
which completes the derivation of Equation \eqref{eq: definition of R}.

\textbf{Step 2: Identification of GATE}

We next show Equation \eqref{eq: result of idetification}. By the law of iterated expectations, we have
\begin{eqnarray*}
    &&\mathbb{E} \Big[ R \big( a, \{ S(j, \bar{\bm{a}}_{N,j-1} ), f(j) \} _{j \leq t}\big) \Big]\\
    &=&\mathbb{E} \Big[\mathbb{E} \Big\{ R \big( a, \{ S(j, \bar{\bm{a}}_{N,j-1} ), f(j) \} _{j \leq t} \big) \Big|\bar{A}_{N,1}=\bar{\bm a}_{N,1}, \{ S(j, \bar{\bm{a}}_{N,j-1} ), f(j) \} _{j \leq t}, \{  Y(j, \bar{\bm{a}}_{N,t} ) \} _{j < t} \Big\}\Big].
\end{eqnarray*}
Under Assumption \ref{assump: CA}, we can replace $Y(1,\bar{\bm a}_{N,1})$ and $S(2,\bar{\bm a}_{N,1})$ with $Y(1)$ and $S(2)$, respectively.  Under Assumption \ref{assump: SRA} and \ref{assump: PA}, the event $(A_2^1,\dots,A_2^N)=(a_2^1,\dots,a_2^N)$ can be included in the conditioning set. This yields that
\begin{align*}
    &\mathbb{E} \Big[  \big( a, \{ S(j, \bar{\bm{a}}_{N,j-1} ), f(j) \} _{j \leq t}\big) \Big]
    =\mathbb{E} \Big[\mathbb{E} \Big\{ R \big( a, \{ S(t, \bar{\bm{a}}_{N,j-1} ), f(j) \} _{j \leq t},  \big) \Big| \\
    &\bar{A}_{N,2}=\bar{\bm a}_{N,2}, \{ S(j, \bar{\bm{a}}_{N,j-1} ), f(j) \} _{j \leq t}, \{  Y(j, \bar{\bm{a}}_{N,t} ) \} _{j < t}, S(1), S(2), Y(1)\Big\}\Big].
\end{align*}

Iteratively applying this argument allows us to repeatedly replace the counterfactual variables with the
observed ones. At the end, all the potential outcomes/states will be replaced with the observed versions
conditional on the actions. The proof is hence completed.
\end{proof}

\subsection{Proof of Proposition \ref{prop: ATE expression}}\label{app:ATE expression}
\begin{proof}
Let $\zeta(t):=N^{-1}
       \mathbb{E} ( \sum_{i=1}^N  Y^i ( t,  \bm{1}_{N,t} ) )
       - N^{-1}
       \mathbb{E} ( \sum_{i=1}^N  Y^i ( t,  \bm{0}_{N,t} ) )$, which means that $\text{GATE}=\sum_{t=1}^m\zeta(t)$.
Thus, we only need to prove that
\begin{align*}
    \zeta(t)&= 
(  \alpha_{0,C_1} (t) -  \alpha_{0,C_0}(t) ) +
 \Big[ ( \alpha_{1,C_1} (t)  -  \alpha_{1,C_0} (t) )^\top\mathbb{E}( N^{-1} \sum_{i=1}^N X^i(t)  )   \Big] \\ 
&+\Big[
\alpha_{2,C_1} (t)^\top   \Big( \sum_{k=1}^{t-1} \{ \prod_{l=k+1}^{t-1} \Phi_{1,C_1}(l) [\gamma_{0, C_1}(k)+\gamma_{1, C_1}(k)+\Phi_{0, C_1} (k)\mathbb{E}( N^{-1} \sum_{i=1}^N H^i_{C_1}(k) )]  \}
+ \prod_{l=1}^{t-1} \Phi_{1,C_1}(l) \mathbb{E}(S(1)) \Big)
\\
&-
\alpha_{2,C_0} (t)^\top \Big(
\sum_{k=1}^{t-1} \{ \prod_{l=k+1}^{t-1} \Phi_{1, C_0}(l) [\gamma_{0, C_0}(k)+\Phi_{0, C_0} (k)\mathbb{E}( N^{-1} \sum_{i=1}^N H^i_{C_0}(k) )]  \}
+ \prod_{l=1}^{t-1} \Phi_{1, C_0}(l) \mathbb{E}(S(1)) \Big)
\Big].
\end{align*}

Since potential outcomes admits the form of models in \eqref{eq:model_Y}-\eqref{eq:model_state_within_group_all}, we have
\begin{align*}
    &\zeta(t)
    =(\alpha_{0,C_1}(t)-\alpha_{0,C_0}(t))+( \alpha_{1,C_1} (t)  -  \alpha_{1,C_0} (t) )^\top\mathbb{E}( N^{-1} \sum_{i=1}^N X^i(t) ) \\
    &+\alpha_{2,C_1}(t)^\top\mathbb{E} ( S_{C_1} ( t,  \bm{1}_{N,t-1} ) )
    - \alpha_{2,C_0}(t)^\top\mathbb{E} (S_{C_0} ( t,  \bm{0}_{N,t-1} ) )\\
    =&(\alpha_{0,C_1}(t)-\alpha_{0,C_0}(t))+( \alpha_{1,C_1} (t)  -  \alpha_{1,C_0} (t) )^\top\mathbb{E}( N^{-1} \sum_{i=1}^N X^i(t)  ) \\
    &+\alpha_{2,C_1}(t)^\top[\gamma_{0,C_1}(t-1)+\gamma_{1,C_1}(t-1)+\Phi_{0,C_1}(t-1)\mathbb{E} ( S_{C_1} ( t-1,  \bm{1}_{N,t-2} ) )+\Phi_{1,C_1}(t-1)\mathbb{E}( N^{-1} \sum_{i=1}^N H^i(t-1)  )]\\
    &- \alpha_{2,C_0}(t)^\top[\gamma_{0,C_0}(t-1)+\Phi_{0,C_0}(t-1)\mathbb{E} ( S_{C_0} ( t-1,  \bm{1}_{N,t-2} ) )+\Phi_{1,C_0}(t-1)\mathbb{E}( N^{-1} \sum_{i=1}^N H^i(t-1)  )]\\
    =&\cdots\\
    =& 
(  \alpha_{0,C_1} (t) -  \alpha_{0,C_0}(t) ) +
 \Big[ ( \alpha_{1,C_1} (t)  -  \alpha_{1,C_0} (t) )^\top\mathbb{E}( N^{-1} \sum_{i=1}^N X^i(t)  )   \Big] \\ 
&+\Big[
\alpha_{2,C_1} (t)^\top   \Big( \sum_{k=1}^{t-1} \{ \prod_{l=k+1}^{t-1} \Phi_{1,C_1}(l) [\gamma_{0, C_1}(k)+\gamma_{1, C_1}(k)+\Phi_{0, C_1} (k)\mathbb{E}( N^{-1} \sum_{i=1}^N H^i_{C_1}(k) )]  \}
+ \prod_{l=1}^{t-1} \Phi_{1,C_1}(l) \mathbb{E}(S(1)) \Big)
\\
&-
\alpha_{2,C_0} (t)^\top \Big(
\sum_{k=1}^{t-1} \{ \prod_{l=k+1}^{t-1} \Phi_{1, C_0}(l) [\gamma_{0, C_0}(k)+\Phi_{0, C_0} (k)\mathbb{E}( N^{-1} \sum_{i=1}^N H^i_{C_0}(k) )]  \}
+ \prod_{l=1}^{t-1} \Phi_{1, C_0}(l) \mathbb{E}(S(1)) \Big)
\Big].
\end{align*}

The proof is hence completed.
\end{proof}

\subsection{Proof of Theorem~\ref{thm:bootstrap_consistency_QTE}}\label{app:bootstrap_consistency_QTE}
\begin{proof}
We focus on provide an upper error bound for
\begin{equation*}
    \rho^*(z)=\left|\mathbb{P}\left(\frac{1}{m}\hat{T}-\frac{1}{m}T\leq z\right)-\mathbb{P}\left(\frac{1}{m}\hat{T}^b-\frac{1}{m}\hat{T}\leq z\bigg|Data\right)\right|.
\end{equation*}

We begin with some notations. Note that, in outcome model, $\widetilde{\theta}_{C}(t), C\in\{C_0,C_1\}$ can be expressed as
\begin{equation*}
    \widetilde{\theta}_{C}(t)=\theta_{s,C}(t)+\frac{1}{n}\sum_{d=1}^n\left(\sum_{j=1}^mB_{d,j,C}(t)e_{d,C}(j)\right),
\end{equation*}
where
\begin{equation*}
    B_{d,j,C}(t)=\omega_{j,h}(t)\left(\frac{1}{n}\sum_{d'=1}^nZ_{d',C}(j)^\top Z_{d',C}(j)\right)^{-1}Z_{d,C}(j),
\end{equation*}
are independent of the random part $e_{d,C}(j)$, and $\theta_{s,C}(t)=\sum_{j=1}^m\omega_{j,h}(t)\theta_C(j)$. Let $e_{d,C}^{\theta}(t)=\sum_{j=1}^mB_{d,j,C}(t)e_{d,C}(j)=\left\{e_{d,C}^{\alpha_{0,C}}(t),\left(e_{d,C}^{\alpha_{1,C}}(t)\right)^\top,\left(e_{d,C}^{\alpha_{2,C}}(t)\right)^\top\right\}^\top$ and $e_{C}^{\theta}(t)=n^{-1/2}\sum_{d=1}^ne_{d,C}^{\theta}(t)$. Similarly, we can represent $\widetilde{\Theta}(t)$ as
\begin{equation*}
    \widetilde{\Theta}(t)=\widetilde{\Theta}_{s,C}(t)+\frac{1}{n}\sum_{d=1}^n\left(\sum_{j=1}^{m-1}\widetilde{B}_{d,j,C}(t)E_{d,C}(t)\right),
\end{equation*}
where
\begin{equation*}
    \widetilde{B}_{d,j,C}(t)=\omega_{j,h}(t)\left(\frac{1}{n}\sum_{d'=1}^D\widetilde{Z}_{d',C}(j)^\top \widetilde{Z}_{d',C}(j)\right)^{-1}\widetilde{Z}_{d,C}(j),
\end{equation*}
are independent of the random part $E_{d,C}(j)$, and $\Theta_{s,C}(t)=\sum_{j=1}^m\omega_{j,h}(t)\Theta_C(j)$. Let $E_{d,C}^{\Theta}(t)=\sum_{j=1}^m\widetilde{B}_{d,j,C}(t)E_{d,C}(j)=\left\{E_{d,C}^{\gamma_{0,C}}(t),\left(E_{d,C}^{\Phi_{0,C}}(t)\right)^\top,\left(E_{d,C}^{\Phi_{1,C}}(t)\right)^\top\right\}^\top$ and $E_{C}^{\Theta}(t)=n^{-1/2}\sum_{d=1}^nE_{d,C}^{\Theta}(t)$. 

The OLS estimation corresponds to the special case $h=0$. We remark that $E_{C}^{\Theta}(t)$ is asymptotically normal when $h=0$ and degenerates to a point distribution when $mh\rightarrow\infty$. To make the following analysis hold for the OLS-based test statistic, we view $E_{C}^{\Theta}(t)$ as a random variable in the discussion below.

For simplicity, let $\text{vec}(\cdot)$ be the operator that reshapes a matrix into a vector by stacking its columns on top of one another. Denote
{\small
\begin{align*}
    &x_{d,C}(t)=\left[e_{d,C}^{\alpha_{0,C}}(t),\left(e_{d,C}^{\alpha_{1,C}}(t)\right)^\top,\left(e_{d,C}^{\alpha_{2,C}}(t)\right)^\top, \left(E_{d,C}^{\gamma_{0,C}}(t)\right)^\top,\left\{\text{vec}\left(E_{d,C}^{\Phi_{0,C}}(t)\right)\right\}^\top,\left\{\text{vec}\left(E_{d,C}^{\Phi_{1,C}}(t)\right)\right\}^\top\right]^\top, \\
    &x_{d}(t)=\left[x_{d,C_0}(t)^\top,x_{d,C_1}(t)^\top\right]^\top\in\mathbb{R}^{2+p+2q(q+p_H+2)},\\
    &x_{d}=\left(x_{d}(2)^\top,x_{d}(3)^\top,\dots,x_{d}(m)^\top\right)^\top\in\mathbb{R}^{p_x},\quad p_x=(m-1)[2+p+2q(q+p_H+2)],
\end{align*}
}
where $p_H$ is the dimension of $\widetilde{X}_{d,C}$ and $\widetilde{X}_{d,S}$.

Let $\{y_{d}\}_{d}$ be independent mean zero Gaussian vectors with $\mathbb{E}y_{d}y_{d}^\top=\mathbb{E}x_{d}x_{d}^\top$. We similarly represent $y_{d}$ as
{\small
\begin{align*}
    &y_{d,C}(t)=\left[\bar{e}_{d,C}^{\alpha_{0,C}}(t),\left(\bar{e}_{d,C}^{\alpha_{1,C}}(t)\right)^\top,\left(\bar{e}_{d,C}^{\alpha_{2,C}}(t)\right)^\top, \left(\bar{E}_{d,C}^{\gamma_{0,C}}(t)\right)^\top,\left\{\text{vec}\left(\bar{E}_{d,C}^{\Phi_{0,C}}(t)\right)\right\}^\top,\left\{\text{vec}\left(\bar{E}_{d,C}^{\Phi_{1,C}}(t)\right)\right\}^\top\right]^\top, \\
    &y_{d}(t)=\left[y_{d,C_0}(t)^\top,y_{d,C_1}(t)^\top\right]^\top\in\mathbb{R}^{2+p+2q(q+p_H+2)},\\
    &y_{d}=\left(y_{d}(2)^\top,y_{d}(3)^\top,\dots,y_{d}(m)^\top\right)^\top\in\mathbb{R}^{p_x},\quad p_x=(m-1)[2+p+2q(q+p_H+2)],
\end{align*}
}
Let $\{e_{d,C_0}^b(j),e_{d,C_1}^b(j),E_{d,C_0}^b(j),E_{d,C_1}^b(j)\}$ be the empirical Gaussian analogs of $\{e_{d,C_0}(j),e_{d,C_1}(j),E_{d,C_0}(j),E_{d,C_1}(j)\}$. In other words, for $d=1,\dots,n$, $j=1,\dots,m$, let
\begin{equation*}
    e_{d,C_0}^b(j)=\hat{e}_{d,C_0}(j)\xi_d,\;e_{d,C_1}^b(j)=\hat{e}_{d,C_1}(j)\xi_d,\;E_{d,C_0}^b(j)=\hat{E}_{d,C_0}(j)\xi_d,\;E_{d,C_1}^b(j)=\hat{E}_{d,C_1}(j)\xi_d,
\end{equation*}
where $\xi_1,\dots,\xi_n$ are i.i.d. standard normal random variables. We next define
{\small
\begin{align*}
    &w_{d,C}(t)=\left[\bar{e}_{d,C}^{\alpha_{0,C},b}(t),\left(\bar{e}_{d,C}^{\alpha_{1,C},b}(t)\right)^\top,\left(\bar{e}_{d,C}^{\alpha_{2,C},b}(t)\right)^\top, \left(\bar{E}_{d,C}^{\gamma_{0,C},b}(t)\right)^\top,\left\{\text{vec}\left(\bar{E}_{d,C}^{\Phi_{0,C},b}(t)\right)\right\}^\top,\left\{\text{vec}\left(\bar{E}_{d,C}^{\Phi_{1,C},b}(t)\right)\right\}^\top\right]^\top, \\
    &w_{d}(t)=\left[w_{d,C_0}(t)^\top,w_{d,C_1}(t)^\top\right]^\top\in\mathbb{R}^{2+p+2q(q+p_H+2)},\\
    &w_{d}=\left(w_{d}(2)^\top,w_{d}(3)^\top,\dots,w_{d}(m)^\top\right)^\top\in\mathbb{R}^{p_x},\quad p_x=(m-1)[2+p+2q(q+p_H+2)],
\end{align*}
}
Let
\begin{eqnarray*}
    &&X=(X_2^\top,X_3^\top,\dots,X_m^\top)=\frac{1}{\sqrt{n}}\sum_{d=1}^nx_{d},\\
    &&Y=(Y_2^\top,Y_3^\top,\dots,Y_m^\top)=\frac{1}{\sqrt{n}}\sum_{d=1}^ny_{d},\\
    &&W=(W_2^\top,W_3^\top,\dots,W_m^\top)=\frac{1}{\sqrt{n}}\sum_{d=1}^nw_{d}.
\end{eqnarray*}
Define the following function
\begin{eqnarray*}
    &&F_{\text{GATE}}(X;\theta_{C_0},\theta_{C_1},\Theta_{C_0},\Theta_{C_1})\\
    &\equiv&\frac{1}{m}\sum_{t=2}^m(  \alpha_{0,C_1} (t) -  \alpha_{0,C_0}(t) +\frac{e_{i,C}^{\alpha_{0,C_1}}(t)}{\sqrt{n}}-\frac{e_{i,C}^{\alpha_{0,C_0}}(t)}{\sqrt{n}})\\
    &&+\frac{1}{m}\sum_{t=1}^m \Big[ \mathbb{E}( n^{-1} \sum_{i=1}^n X^i(t)  ) ( \alpha_{1,C_1} (t)  -  \alpha_{1,C_0} (t) +\frac{e_{i,C}^{\alpha_{1,C_1}}(t)}{\sqrt{n}}-\frac{e_{i,C}^{\alpha_{1,C_0}}(t)}{\sqrt{n}})  \Big]\\
    &&+\frac{1}{m}\sum_{t=1}^m \Big[
\Big(\alpha_{2,C_1} (t)+\frac{e_{i,C}^{\alpha_{2,C_1}}(t)}{\sqrt{n}}\Big)^\top   \Big( \sum_{k=1}^{t-1} \{ \prod_{l=k+1}^{t-1} (\Phi_{2,C_1}(l)+\frac{E_{i,C}^{\Phi_{2,C_1}}(l)}{\sqrt{n}}) (\Phi_{0, C_1}(k)+\frac{E_{i,C}^{\Phi_{0,C_1}}(k)}{\sqrt{n}}\\
&&+
(\Phi_{1, C_1} (t)+\frac{E_{i,C}^{\Phi_{1,C_1}}(t)}{\sqrt{n}})\mathbb{E}( n^{-1} \sum_{i=1}^n H^i_{C_1}(t) )  )  \}+ \prod_{l=1}^{t-1} (\Phi_{2,C_1}(l)+\frac{E_{i,C}^{\Phi_{2,C_1}}(l)}{\sqrt{n}}) \mathbb{E}(S(1)) \Big)\\
&&-\Big(\alpha_{2,C_0} (t)+\frac{e_{i,C}^{\alpha_{2,C_0}}(t)}{\sqrt{n}}\Big)^\top   \Big( \sum_{k=1}^{t-1} \{ \prod_{l=k+1}^{t-1} (\Phi_{2,C_0}(l)+\frac{E_{i,C}^{\Phi_{2,C_0}}(l)}{\sqrt{n}}) (\Phi_{0, C_0}(k)+\frac{E_{i,C}^{\Phi_{0,C_0}}(k)}{\sqrt{D}}\\
&&+
(\Phi_{1, C_0} (t)+\frac{E_{i,C}^{\Phi_{1,C_0}}(t)}{\sqrt{n}})\mathbb{E}( n^{-1} \sum_{i=1}^n H^i_{C_0}(t) )  )  \}+ \prod_{l=1}^{t-1} (\Phi_{2,C_0}(l)+\frac{E_{i,C}^{\Phi_{2,C_0}}(l)}{\sqrt{D}}) \mathbb{E}(S(1)) 
\Big)\Big].
\end{eqnarray*}

We next represent the proposed test statistic and the bootstrap samples based on $F_{\text{GATE}}$. Recall that $\Theta_{s,C}(t)=\sum_{j=1}^m\omega_{j,h}(t)\Theta_C(j)$ and $\theta_{s,C}(t)=\sum_{j=1}^m\omega_{j,h}(t)\theta_C(j)$ are the smoothed parameters, and $\widetilde{\theta}_C, \widetilde{\Theta}_C$ correspond to the estimates. The difference between the proposed test statistic and the oracle indirect effect $m^{-1}(\hat{T}-\text{GATE})$ can be represented as $T_0^*=F_{\text{GATE}}(X;\theta_{s,C_0},\theta_{s,C_1},\Theta_{s,C_0},\Theta_{s,C_1})-F_{\text{GATE}}(0;\theta_{C_0},\theta_{C_1},\Theta_{C_0},\Theta_{C_1})$. Similarly, we can represent $T^{-1}(\widehat{\text{GATE}}^b-\widehat{\text{GATE}})$ by $W_0^*=F_{\text{GATE}}(W;\widetilde{\theta}_{C_0},\widetilde{\theta}_{C_1},\widetilde{\Theta}_{C_0},\widetilde{\Theta}_{C_1})-F_{\text{GATE}}(0;\widetilde{\theta}_{C_0},\widetilde{\theta}_{C_1},\widetilde{\Theta}_{C_0},\widetilde{\Theta}_{C_1})$. By definition, we have
\begin{equation*}
    \rho^*(z)=\Big|P\{T_0^*\leq z\}-P\{W_0^*\leq z\}\Big|.
\end{equation*}
We also define the oracle statistics: $T_0=F_{\text{GATE}}(X;\theta_{C_0},\theta_{C_1},\Theta_{C_0},\Theta_{C_1})-F_{\text{GATE}}(0;\theta_{C_0},\theta_{C_1},\Theta_{C_0},\Theta_{C_1})=F_{\text{GATE}}(X)-F_{\text{GATE}}(0)$, $W_0=F_{\text{GATE}}(W;\theta_{C_0},\theta_{C_1},\Theta_{C_0},\Theta_{C_1})-F_{\text{GATE}}(0;\theta_{C_0},\theta_{C_1},\Theta_{C_0},\Theta_{C_1})=F_{\text{GATE}}(W)-F_{\text{GATE}}(0)$ by replacing $\theta_{s,C}$, $\widetilde{\theta}_C$, $\Theta_{s,C}$ and $\widetilde{\Theta}_C$, $C\in\{C_0,C_1\}$, with the oracle values. This yields an upper bound for
\begin{equation*}
    \rho(z)=\Big|P\{T_0\leq z\}-P\{W_0\leq z\}\Big|.
\end{equation*}

Following the derivation of Lemma 4 in \citet{luo2024policy}, we know $\sup_z\rho(z)\leq \widetilde{C} n^{-1/8}$, for some constant $\widetilde{C}>0$.

Notice that
\begin{eqnarray*}
    \rho^*(z)&=&\Big|P\{T_0^*\leq z\}-P\{W_0^*\leq z\}\Big|\\
    &\leq&\Big|P\{T_0^*\leq z\}-P\{T_0\leq z\}\Big|+\Big|P\{W_0^*\leq z\}-P\{W_0\leq z\}\Big|+\Big|P\{T_0\leq z\}-P\{W_0\leq z\}\Big|\\
    &\leq&I_1+I_2+\widetilde{C} n^{-1/8},
\end{eqnarray*}
where $I_1$ and $I_2$ denote the above first two components, respectively. 

Define $T_{01}^*:=F_{\text{GATE}}(X;\theta_{s,C_0},\theta_{s,C_1},\Theta_{s,C_0},\Theta_{s,C_1})-F_{\text{GATE}}(0;\theta_{s,C_0},\theta_{s,C_1},\Theta_{s,C_0},\Theta_{s,C_1})$. Then $I_1$ can be further divided into two part:
\begin{eqnarray*}
    I_1&=&\Big|P\{T_0^*\leq z\}-P\{T_0\leq z\}\Big|\\
    &\leq&\Big|P\{T_0^*\leq z\}-P\{T_{01}^*\leq z\}\Big|+\Big|P\{T_{01}^*\leq z\}-P\{T_0\leq z\}\Big|\\
    &=&I_{11}+I_{12},
\end{eqnarray*}
where $I_{11}$ and $I_{12}$ denote the above two components, respectively. 

Similar to the proof of Theorem 2 in \citet{luo2024policy}, we can obtain that
\begin{equation*}
    I_2\leq \widetilde{C}n^{-1/8},\:
    I_{12}\leq \widetilde{C}n^{-1/8}.
\end{equation*}
Also, according to the proof of Theorem 2 in \citet{luo2024policy}, we know that
\begin{equation*}
    I_{11}=O(n^{1/2}h^2+n^{1/2}m^{-1}),
\end{equation*}
holds with probability $1$ as $n\rightarrow\infty$. The proof is hence completed.

\end{proof}

\end{document}